\def\fullversion{1}

\ifnum\fullversion=0
\documentclass[conference]{IEEEtran}
\else
\documentclass[11pt]{article}
\usepackage{fullpage}
\usepackage{authblk}

\fi

\usepackage{hyperref} 
\hypersetup{
    colorlinks,
    linkcolor={red!50!black},
    citecolor={blue!50!black},
    urlcolor={blue!80!black}
}

\usepackage[
	lambda,
	operators,
	advantage,
	sets,
	adversary,
	landau,
	probability,
	notions,
	logic,
	ff,
	mm,
	primitives,
	events,
	complexity,
	asymptotics,
	oracles,
	keys]{cryptocode}
\pcfixcleveref
\usepackage[T1]{fontenc}
\usepackage{graphicx}
\usepackage[utf8]{inputenc}

\usepackage{tikz}
\usetikzlibrary{backgrounds,arrows.meta,calc,positioning,fit}

\usepackage{pgfplots} 
\usepgfplotslibrary{groupplots}
\pgfplotsset{compat=1.18}

\usepackage{booktabs}
\usepackage{multirow}

\usepackage{csquotes} 
\usepackage[open, depth=2, openlevel=1, numbered]{bookmark} 
\usepackage[dvipsnames,table]{xcolor}
\definecolor{apricot}{cmyk}{0,0.32,0.52,0}
\usepackage{xspace}
\usepackage{pifont} 
\usepackage{varwidth} 

\usepackage[capitalize,noabbrev]{cleveref}

\usepackage[most]{tcolorbox} 

\usepackage[inline]{enumitem}

\setlist[description]{noitemsep, leftmargin = 1em}

\setlist[itemize]{leftmargin=1.5em,itemsep=0pt}

\newlist{inlineenum}{enumerate*}{1}
\setlist[inlineenum]{label=(\roman*),ref=(\roman*)}
\newlist{inlineenum*}{enumerate*}{1}
\setlist[inlineenum*]{label={},afterlabel={}}

\newlist{constraintlist}{enumerate}{1}
\setlist[constraintlist]{label=\textbf{(C\arabic*)}, ref=C\arabic*, leftmargin=2.4em, itemsep=3pt}
\crefname{constraintlisti}{constraint}{constraints}
\Crefname{constraintlisti}{Constraint}{Constraints}

\usepackage{amssymb}
\usepackage{amsmath}
\usepackage{bbm}

\usepackage{amsthm}
\newtheorem{theorem}{Theorem}
\newtheorem{lemma}[theorem]{Lemma}

\newtheorem{corollary}[theorem]{Corollary}

\theoremstyle{definition}

\theoremstyle{plain}

\tcbuselibrary{breakable}
\newtcolorbox{textdef}{
  colback=white,
  colframe=black,
  boxrule=0.5pt,
  arc=0pt,
  left=1ex,
  right=1ex,
  top=1ex,
  bottom=1ex
}

\usepackage{orcidlink} 
\ifnum\fullversion=1
\usepackage[font=small]{caption} 
\fi
\newcommand{\Iceberg}{\textsf{Iceberg}\xspace}
\DeclareRobustCommand{\scheme}[1]{%
  \ensuremath{\textsf{Iceberg}\ifnum#1=0 _{\mathrm{base}}\fi}%
}

\newcommand{\defeq}{\gets}
\newcommand{\mathsc}[1]{{\normalfont\textsc{#1}}}
\newcommand{\vpss}[1]{\mathsf{VPSS}_{#1}}
\newcommand{\interval}[1]{\left[#1\right]}
\newcommand{\verts}[1]{\left\vert #1\right\vert}
\newcommand{\grgen}{\mathsf{GrGen}}
\newcommand{\setup}{\mathsf{Setup}}
\newcommand{\keygen}{\mathsf{KeyGen}}
\newcommand{\gen}{\mathsf{Gen}}
\newcommand{\ver}{\mathsf{Verify}}
\newcommand{\vpssver}[1]{\vpss{#1}.\ver} 
\newcommand{\agg}{\mathsf{Agg}}
\newcommand{\recover}{\mathsf{Recover}}
\newcommand{\preround}{\mathsf{PreRound}}
\newcommand{\preagg}{\mathsf{PreAgg}}
\newcommand{\signround}{\mathsf{SignRound}}
\newcommand{\signagg}{\mathsf{SignAgg}}
\newcommand{\signaggext}{\mathsf{SignAggExt}}
\newcommand{\signo}{\mathsc{OSignRound}}
\newcommand{\preroundo}{\mathsc{OPreRound}}
\newcommand{\keygeno}{\mathsc{OKeyGen}}
\newcommand{\keyaggcoef}{\mathsf{KeyAggCoef}}
\newcommand{\musigcoef}{\mathsf{KeyAggCoef}}
\newcommand{\keyagg}{\mathsf{KeyAgg}}
\newcommand{\Hprf}{\mathsf{H}_{\mathrm{prf}}}
\newcommand{\Hsig}{\mathsf{H}_{\mathrm{sig}}}
\newcommand{\Hagg}{\mathsf{H}_{\mathrm{agg}}}
\newcommand{\Hnon}{\mathsf{H}_{\mathrm{non}}}
\newcommand{\Hnonover}{\mathsf{H}_{\overline{\mathrm{non}}}}

\newcommand{\ctrs}{\mathit{ctrs}}
\newcommand{\param}{\mathit{par}} 
\newcommand{\gparam}{(\GG,p,g)}
\newcommand{\ssum}{\textstyle \sum}
\newcommand{\sprod}{\textstyle \prod}
\newcommand{\sbinom}[2]{\textstyle \binom{#1}{#2}}
\newcommand{\Zpp}{\mathbb{Z}_p}
\newcommand{\bool}{\{0,1\}}
\newcommand{\str}{\bool^*}
\newcommand{\pcsc}{\,;~}
\newcommand{\tX}{\widetilde{X}}

\newtcbox{\hlbox}{on line,
  colback=gray!10,
  left=0pt,right=0pt,top=0pt,bottom=0pt,
  boxrule=0pt,
  enhanced,
  tcbox raise base,
  math upper
}
\newcommand{\highlight}[1]{\hlbox{#1}}
\newcommand{\tr}{\mathtt{true}}

\newcommand{\msg}{\mathit{out}}
\renewcommand{\state}{\mathit{state}}

\ifdefined\pcgame 
\renewcommandx{\pcgame}[4][3=\adv,4=(\secpar)]{{\operatorname{#1}_{#2}^{#3}#4}}
\else
\newcommandx{\pcgame}[4][3=\adv,4=(\secpar)]{{\operatorname{#1}_{#2}^{#3}#4}}
\fi
\unless\ifdefined\pclinecomment
\newcommand{\pclinecomment}[2][0em]{\hspace{#1}{\mbox{/\!\!/ } \text{\scriptsize#2}}}
\fi

\newcommand{\aomdl}{\ensuremath{\mathsf{AOMDL}}\xspace}
\newcommand{\DL}{\mathsc{DLog}}
\newcommand{\tsmseufcma}{\ensuremath{\mathsf{TS\textnormal{-}MS\textnormal{-}EUF\textnormal{-}CMA}}\xspace}
\newcommand{\corrupt}{\text{corrupt}}
\newcommand{\honest}{\text{honest}}
\newcommand{\thres}{\ensuremath{t}}      
\newcommand{\groupsize}{\ensuremath{n}}  

\newcommand{\rot}[1]{\rotatebox[origin=l]{90}{#1}}
\newcommand{\cmark}{\textcolor{green!55!black}{\ding{51}}}   
\newcommand{\xmark}{\textcolor{red!65!black}{\ding{55}}}     
\newcommand{\nax}{\textcolor{black!45}{--}}                  

\newcommand{\defitem}[1]{\underline{#1}}
\newcommand{\mypar}[1]{\paragraph{#1}}

\newcommand{\nestedmusig}{\ensuremath{\mathsf{NestedMuSig2}}\xspace} 
\newcommand{\sid}{\ensuremath{\mathit{sid}}}                        
\newcommand{\sidspace}{\ensuremath{\mathcal{SID}}}                  
\newcommand{\ctx}{\ensuremath{\mathit{ctx}}}                        
\newcommand{\aset}{\ensuremath{\mathbf{a}}}
\newcommand{\asets}{\ensuremath{\mathbf{A}}}
\newcommand{\asetsof}[1]{\ensuremath{\asets_{#1}}}

\crefname{paragraph}{Section}{Sections}
\Crefname{paragraph}{Section}{Sections}

\newcommand{\papertitle}{Enabling Threshold Custody for the Lightning Network with Nested Threshold Multi-Signatures}
\newcommand{\paperkeywords}{Lightning Network, Threshold Custody, Nested
Threshold Multi-Signatures, MuSig2, Schnorr Signatures}
\ifnum\fullversion=0
  \newcommand{\paperauthors}{Anonymous Submission}
\else
  \newcommand{\paperauthors}{Paul Gerhart, Nadav Kohen, Jesse Posner, Matias Furszyfer}
\fi
\hypersetup{
    pdftitle={\papertitle},
    pdfauthor={\paperauthors},
    pdfkeywords={\paperkeywords},
}

\makeatletter
\ifnum\fullversion=0
  \newcommand{\tabcaption}[2]{\caption{#1}\label{#2}}
  \newcommand{\tabcaptionend}{}
\else
  \newcommand{\tabcaption}[2]{\gdef\ic@cap{#1}\gdef\ic@lab{#2}}
  \newcommand{\tabcaptionend}{\caption{\ic@cap}\label{\ic@lab}}
  \let\ic@paragraph\paragraph
  \renewcommand{\paragraph}[1]{\ic@paragraph{#1.}}
\fi
\makeatother

\ifnum\fullversion=0
  \newcommand{\prototypecite}{\cite{prototype}}
\else
  \newcommand{\prototypecite}{\cite{prototypefull}}
\fi

\ifnum\fullversion=0
  \newcommand{\fullscale}{1}
  \newcommand{\tabscale}{1}
  \newcommand{\figwidth}{0.66\linewidth}
  \newcommand{\gamescale}{1}
  \newcommand{\tabfont}{\footnotesize}
\else
  \newcommand{\figwidth}{6.2cm}
  \newcommand{\fullscale}{0.9}
  \newcommand{\gamescale}{0.75}
  \newcommand{\tabfont}{\small}
  \newcommand{\tabscale}{0.45}
\fi

\newcommand{\shrinkfit}[2]{%
  \sbox\icfitbox{#2}%
  \ifdim\wd\icfitbox>\columnwidth
    \resizebox{#1\columnwidth}{!}{\usebox\icfitbox}%
  \else
    \resizebox{#1\wd\icfitbox}{!}{\usebox\icfitbox}%
  \fi}

\ifnum\fullversion=0
  \newenvironment{wfigure}[1][tp]{\begin{figure*}[#1]}{\end{figure*}}
\else
  \newenvironment{wfigure}[1][tp]{\begin{figure}[#1]}{\end{figure}}
\fi

\ifnum\fullversion=0
  \newcommand{\pairfigstart}{\begin{figure*}[tp]\centering
    \begin{minipage}[t]{\columnwidth}\centering\vspace{0pt}}
  \newcommand{\pairfigmid}{\end{minipage}\hfill
    \begin{minipage}[t]{\columnwidth}\centering\vspace{0pt}}
  \newcommand{\pairfigend}{\end{minipage}\end{figure*}}
\else
  \newcommand{\pairfigstart}{\begin{figure}[!ht]\centering}
  \newcommand{\pairfigmid}{\end{figure}\begin{figure}[!ht]\centering}
  \newcommand{\pairfigend}{\end{figure}}
\fi

\newsavebox{\icfitbox}
\newcommand{\fitbox}[2]{%
  \sbox\icfitbox{#2}%
  \ifdim\wd\icfitbox>#1\relax
    \resizebox{#1}{!}{\usebox\icfitbox}%
  \else
    \usebox\icfitbox
  \fi}

\begin{document}

\ifnum\fullversion=0
\title{\papertitle}
\else
\title{{\bf \papertitle}}
\fi

\ifnum\fullversion=0
\author{\IEEEauthorblockN{Paper \#\,XXX}
\IEEEauthorblockA{Anonymous Submission}}

\IEEEoverridecommandlockouts
\makeatletter\def\@IEEEpubidpullup{6.5\baselineskip}\makeatother
\IEEEpubid{\parbox{\columnwidth}{
		Network and Distributed System Security (NDSS) Symposium 2026\\
		23 - 27 February 2026 , San Diego, CA, USA\\
		ISBN 979-8-9919276-8-0\\
		https://dx.doi.org/10.14722/ndss.2026.[23$|$24]xxxx\\
		www.ndss-symposium.org
}
\hspace{\columnsep}\makebox[\columnwidth]{}}
\else
\author[1]{Paul Gerhart\orcidlink{0000-0002-0164-0187}$^{*}$}
\author[2]{Nadav Kohen\orcidlink{0009-0009-4891-5342}$^{*}$}
\author[3]{Jesse Posner\orcidlink{0009-0006-3020-0769}}
\author[2]{Matias Furszyfer\orcidlink{0009-0003-8946-2040}}
\affil[1]{TU Wien, Vienna, Austria}
\affil[2]{Chaincode Labs, New York, United States}
\affil[3]{Vora, San Francisco, United States}
\date{}
\fi

\maketitle

\ifnum\fullversion=1
\renewcommand{\thefootnote}{\fnsymbol{footnote}}
\footnotetext[1]{Authors contributed equally to this work.}
\renewcommand{\thefootnote}{\arabic{footnote}}
\fi

\begin{abstract}
	The Bitcoin Lightning Network secures hundreds of millions of dollars, yet channel endpoints rely on vulnerable single online keys.
Although threshold signatures are routinely used to protect on-chain Bitcoin, no practical deployment has been possible for Lightning channels.
This is because thresholdizing a Lightning party requires nesting a threshold signature scheme inside of an established two-party MuSig2 protocol without altering its nonce exchange or message flow.

In this work, we resolve this limitation by formalizing nested threshold multi-signatures, a new cryptographic primitive for thresholdizing one participant inside a multi-signature protocol.
As an instance of this primitive, we present Iceberg, the first construction for nested threshold MuSig2 signatures.
Iceberg enables one side of a Lightning channel to operate as a $t$-of-$n$ threshold group while appearing to the counterparty as a standard MuSig2 participant. 
As a result, threshold custody can be deployed unilaterally on today's Lightning Network without requiring any modifications to Bitcoin, the Lightning protocol, or channel counterparties.

We prove the security of Iceberg, integrate a prototype into a production Lightning node, and benchmark its performance. 
Our measurements show that thresholdizing a Lightning channel incurs only modest overhead, since a threshold group tolerating one corrupted member sustains over $93\%$ of the payment throughput of an unmodified endpoint.
\end{abstract}

\ifnum\fullversion=0
\begin{IEEEkeywords}
\paperkeywords
\end{IEEEkeywords}

\IEEEpeerreviewmaketitle
\else
\begin{quote}\small\textbf{Keywords.} \paperkeywords\end{quote}
\fi

\section{Introduction}
\label{sec:intro}

The Bitcoin Lightning Network~\cite{PD16}, a payment-channel layer that settles Bitcoin off-chain, now carries \emph{real money} at \emph{institutional scale}. The value locked in its public channels reached a record of roughly 5,600 BTC, about \$490 million, in December 2025~\cite{lnstats}. This value is held across on the order of 15,000 public nodes and tens of thousands of channels~\cite{lnnodes}, and because most channels are private these figures are \emph{lower bounds}. Major exchanges including Bitfinex, OKX, Kraken, Binance, and Coinbase now settle customer deposits and withdrawals over Lightning~\cite{lnexchanges}, which is cheaper and faster than settling on-chain and increasingly what customers expect.

However, every payment-channel in the Lightning Network rests on a fragile foundation. 
On each side of the channel, there is a single online key holding the channel endpoint's funds, and a single key is a single point of failure. 
Relying on a single key can fail in two ways. The key can be \emph{compromised}: in 2019, attackers drained 7,000 BTC from a single exchange hot wallet in one transaction that passed every security check~\cite{Binance19}. Or the key can be \emph{lost}: hundreds of millions of dollars sit permanently beyond reach on one encrypted drive whose owner forgot the password~\cite{Thomas21}. 
Neither is recoverable, and both kinds of failure recur throughout Bitcoin's history. 
This is already a serious problem for Bitcoin custody. In Lightning, it becomes even more acute, since keeping a channel live and routing payments requires the key to stay online, ruling out the protection of cold storage.

On-chain, a single signing key being a single point of failure is largely a solved problem. Threshold signing~\cite{Desmedt94} ($t$-of-$n$) distributes signing authority across $n$ parties so that any $t$ of them can sign while no single key corruption is fatal, because the compromise of up to $\thres-1$ shares does not allow forging signatures, and similarly the loss of up to $n-\thres$ shares is survivable, as long as $\thres$ shares remain. Threshold custody is standard practice for institutions and serious self-custodians~\cite{CCS:LinNof18,FCW:SwaPoi21}. Bitcoin supports it natively in Script by listing several keys and requiring a quorum of signatures~\cite{BIP11}. Modern Schnorr-based schemes such as MuSig2~\cite{NRS21} (specified in BIP~327~\cite{BIP327}) and FROST~\cite{SAC:KomGol20} achieve the same function more compactly and privately, collapsing the whole group into one aggregate key and signature.
However, the protection that on-chain custody takes \emph{for granted is missing} for Lightning channels. 

To see why, consider how a Lightning channel works (\Cref{fig:channel}). 
%
\begin{figure*}
  \centering
  \resizebox{\linewidth}{!}{%
  \begin{tikzpicture}[
    font=\small,
    lane/.style={rounded corners=4pt},
    badge/.style={circle, draw, thick, inner sep=0pt, minimum size=4.6mm,
                  font=\bfseries\footnotesize, fill=white},
    party/.style={draw, rounded corners=2pt, fill=gray!12, minimum width=1.05cm,
                  minimum height=0.5cm, font=\small},
    musig/.style={draw, thick, rounded corners=2pt, fill=gray!18, draw=black!55,
                  minimum width=1.1cm, minimum height=0.56cm, font=\footnotesize},
    sig/.style={draw, thick, rounded corners=2pt, minimum width=0.88cm,
                minimum height=0.54cm, font=\small},
    sigon/.style={sig, fill=orange!16, draw=orange!65!black},
    sigoff/.style={sig, fill=teal!16, draw=teal!55!black},
    tx/.style={draw, thick, rounded corners=2pt, fill=orange!14, draw=orange!65!black,
               align=center, font=\small, minimum height=0.62cm},
    arr/.style={-{Latex[length=1.6mm]}, semithick},
    bcast/.style={-{Latex[length=1.9mm]}, thick, draw=orange!60!black},
    ss/.style={font=\scriptsize, inner sep=1pt, text=black!60},
    tag/.style={font=\scriptsize\itshape, text=black!55, align=center},
    sep/.style={dashed, line width=0.7pt, draw=black!40},
  ]
    \begin{scope}[on background layer]
      \fill[lane, orange!6] (0,1.15) rectangle (16.3,2.22);
      \fill[lane, teal!6]   (0,-0.98) rectangle (16.3,1.00);
      \draw[sep] (5.40,-0.62) -- (5.40,2.22);
      \draw[sep] (11.00,-0.62) -- (11.00,2.22);
    \end{scope}
    \node[rotate=90, font=\bfseries\footnotesize, text=orange!55!black] at (-0.32,1.68)
       {On chain};
    \node[rotate=90, font=\bfseries\footnotesize, text=teal!55!black, align=center] at (-0.30,0.05)
       {MuSig2};

    \node[badge] at (0.70,2.44) {1};
    \node[anchor=west, font=\bfseries] at (1.05,2.44) {Open};
    \node[badge] at (5.80,2.44) {2};
    \node[anchor=west, font=\bfseries] at (6.15,2.44) {Update};
    \node[badge] at (11.20,2.44) {3};
    \node[anchor=west, font=\bfseries] at (11.55,2.44) {Close};

    \node[tx] (TX1) at (3.55,1.68) {tx: Alice, Bob $\rightarrow \tX$};
    \node[tag, text=orange!45!black, text width=4.8cm] at (8.20,1.68)
       {no transactions in between: \\ the channel is invisible on chain};
    \node[tx] (TX2) at (14.55,1.68) {tx: $\tX \rightarrow$ Alice, Bob};

    \node[party] (A1) at (0.62,0.64) {Alice};
    \node[party] (B1) at (0.62,0.04) {Bob};
    \node[musig] (M1) at (2.30,0.34) {$\keyagg$};
    \node[sigon] (G1) at (3.55,0.34) {$\tX$};
    \draw[arr] (A1.east) -- node[ss,above,pos=0.45,inner sep=2pt]{$X_A$}(M1.west);
    \draw[arr] (B1.east) -- node[ss,below,pos=0.45,inner sep=2pt]{$X_B$}(M1.west);
    \draw[arr] (M1) -- (G1);
    \draw[bcast] (G1.north) -- node[ss,right,text=orange!55!black]{output} (TX1.south);
    \node[tag, text width=3.4cm] at (3.05,-0.28)
       {aggregate key for the channel};

    \node[party] (A2) at (6.12,0.64) {Alice};
    \node[party] (B2) at (6.12,0.04) {Bob};
    \node[musig] (M2) at (7.80,0.34) {MuSig2};
    \node[sigoff] at (9.19,0.48) {};
    \node[sigoff] at (9.12,0.41) {};
    \node[sigoff] (G2) at (9.05,0.34) {$\sigma_i$};
    \node[ss, text=black!55] at (9.95,0.56) {$\times k$};
    \draw[arr] (A2.east) -- node[ss,above,pos=0.45,inner sep=2pt]{$s_A$}(M2.west);
    \draw[arr] (B2.east) -- node[ss,below,pos=0.45,inner sep=2pt]{$s_B$}(M2.west);
    \draw[arr] (M2) -- (G2);
    \node[tag, text width=3.9cm] at (8.70,-0.30)
       {off chain and private, revoking the previous balance};

    \node[party] (A3) at (11.62,0.64) {Alice};
    \node[party] (B3) at (11.62,0.04) {Bob};
    \node[musig] (M3) at (13.30,0.34) {MuSig2};
    \node[sigon] (G3) at (14.55,0.34) {$\sigma_{\mathrm{fin}}$};
    \draw[arr] (A3.east) -- node[ss,above,pos=0.45,inner sep=2pt]{$s_A$}(M3.west);
    \draw[arr] (B3.east) -- node[ss,below,pos=0.45,inner sep=2pt]{$s_B$}(M3.west);
    \draw[arr] (M3) -- (G3);
    \draw[bcast] (G3.north) -- node[ss,right,text=orange!55!black]{broadcast} (TX2.south);

    \node[tag, text=black!70, text width=15cm] at (8.15,-0.80)
       {every MuSig2 signature $\sigma$ is indistinguishable from an ordinary single-key
        BIP\,340 Schnorr signature};

    \draw[arr, line width=0.9pt] (0.15,-1.18) -- (16.30,-1.18);
    \node[font=\scriptsize] at (16.02,-1.02) {time};
  \end{tikzpicture}}
  \caption{\textbf{A Lightning channel has one aggregate key and a sequence of MuSig2
    signatures.} Alice and Bob lock funds under one aggregate MuSig2 key
    $\tX=\keyagg(\{X_A,X_B\})$, which the funding transaction
    $\text{Alice},\text{Bob}\rightarrow\tX$ carries in its output. Every balance update and the
    close is a single MuSig2 2-of-2 signature formed from the partial signatures $s_A$ and $s_B$.
    Only the funding and the closing transaction $\tX\rightarrow\text{Alice},\text{Bob}$ reach the
    chain; everything in between stays private and off-chain.}
  \label{fig:channel}
\end{figure*}
Two parties, say Alice and Bob, open a channel by locking funds into a single output they jointly control, namely a 2-of-2 that, on modern taproot channels~\cite{STC}, takes the form of one aggregate MuSig2 key~\cite{NRS21}. Spending it requires both parties to sign, so neither can move the funds alone. Because it is an aggregate key, the output is indistinguishable on-chain from an ordinary single-key address.
For this reason, the rest of the world cannot even tell a channel is open. From then on, Alice and Bob update their shared balance privately and off-chain, each update re-signing a transaction spending that funding output using MuSig2.
When they decide to close the channel, they broadcast the latest agreed state, spending the funding output and distributing the funds according to the final channel balance, either cooperatively or, if necessary, unilaterally.

Alice's funding key is one half of that 2-of-2 signing key, and on her side, \emph{an online single point of failure she would like to distribute}. 
As of today, Alice \emph{cannot do so}, as the channel's structure stands in the way at multiple stages. 
On the legacy channels Bitcoin first supported~\cite{BOLT3}, the 2-of-2 funding output is a Bitcoin Script, and Alice could in principle widen her half into a threshold script, but not alone, since it would require changing the Lightning protocol both peers run. Furthermore, it would force a non-standard script to be negotiated at every channel open, enlarge every transaction and its fee, and leave an on-chain footprint that advertises her key-management setup, an undesirable privacy leak.
Taproot channels~\cite{BIP341,STC}, which the ecosystem is now adopting for their privacy and smaller transactions, further complicate matters for would-be non-standard script users, since the two sides aggregate into a single MuSig2 key spent by one ordinary signature, and MuSig2 gives each party just one key to contribute, with nothing visible to widen.
Either way, with Lightning moving to taproot channels, a threshold Alice can actually use must live inside one ordinary-looking MuSig2 key. 
In addition, as of today, Alice can only use a threshold if the changes she makes require no changes from Bob, the protocol, or the chain.
Therefore, thresholdizing Alice's side of the channel calls for a scheme that creates threshold partial signatures within an outer MuSig2 session, which is the new primitive we introduce in this work formalized as a \emph{nested threshold multi-signature}. 

\label{sec:intro-barriers}
Finding constructions for a nested threshold multi-signature has been a long-standing open problem for the Bitcoin community~\cite{Ste19,Ste19b,KC23,Sed19,Shl20,Zmn25}.
The main reason why finding a nested threshold multi-signature for MuSig2 is hard is due to the nonce $R$ of a Schnorr signature $(R, s)$ that MuSig2 computes. 
Lightning fixes the session nonce $R$ a full round-trip before the transaction it will sign exists~\cite{STC}, and once Alice has sent her part of it to Bob she cannot revise it. 
This is fine if Alice is a single signer, but if she is thresholdized, her signer set may change in between rounds.
Whichever signing set of Alice's devices is online when the transaction arrives has to complete the signature under a nonce that was fixed before the transaction existed, and possibly chosen by a different signing set. 
In the literature, this issue is solved using stateless threshold signatures such as Arctic~\cite{KG24} using deterministic nonces and replicated secret sharing~\cite{ItoSaiNis87,TCC:CraDamIsh05}. 
Unfortunately, using stateless signatures does not work in the Lightning setting for at least two reasons. 
First, stateless Schnorr signatures (including Arctic) typically derive the nonce from the message to be signed and the current signing session to avoid key leakage that would arise if the two different messages are signed with the same randomness. 
Yet, in the Lightning setting Alice knows neither the message to be signed nor the final signing session when she needs to issue her part of the nonce $R$. 
Second, if the signers in Lightning would be stateless, they could be convinced to sign an outdated channel state which would make them subject to Lightning's punishment mechanisms. 
Hence, any solution for thresholdized Lightning must at least safely carry the latest channel state to be meaningful. 
We discuss the restrictions on nested threshold multi-signatures imposed by the structure of Lightning channels alongside our mitigations of these in~\cref{sec:to}.

Given that threshold custody for Lightning is highly desirable for the Bitcoin community and yet not solved as of today we ask the question:
\begin{quote}
\centering
\emph{Can threshold custody be deployed in today's Lightning Network?
}	
\end{quote}

\subsection{Our Contribution}
\label{sec:intro-contribution}
In this paper, we answer this question in the affirmative. 

\begin{itemize}
	\item We introduce the notion of \emph{nested threshold multi-signatures}, a novel cryptographic primitive designed to $t$-of-$n$ threshold-sign partial signatures of a given multi-signature scheme (\cref{sec:definition}). 
	\item We present the first nested threshold multi-signature \Iceberg, that allows $t$-of-$n$ threshold sign partial MuSig2 signatures~\cite{NRS21,BIP327} (\cref{sec:construction}).
	\item We prove the unforgeability of \Iceberg (\cref{sec:security}) with optimal signing thresholds dictated by the Lightning environment.
		  In particular, unforgeability holds if an adversary compromises up to $t-1$ signing key shares, and any $2t-1$ signing shares are sufficient to produce a valid signature. (Cf.~\cref{sec:to} for a discussion on the optimality of these parameters). 
	\item We show that \Iceberg is a drop-in replacement for MuSig2 at a Lightning endpoint. Its first-round output is exactly the message a single MuSig2 participant would send, and the channel it funds is an ordinary single-key Taproot output.
	In other words, an \Iceberg endpoint is compatible with the Lightning protocol as deployed. As a result, threshold custody can be added unilaterally, with no change to Bitcoin, to the protocol, or to the counterparty.
	\item We integrate \Iceberg into a production Lightning node and measure it over taproot channels~(\cref{sec:evaluation}).
	Making one endpoint a $2$-of-$4$ group adds $3.8$\,ms per payment, which is $6.7\%$ of the work a payment already costs.
	That group only saturates an additional core at $260$ payments per second.
\end{itemize}

\subsection{Related Work}
\label{sec:intro-related}

Threshold signatures~\cite{Desmedt94,STOC:DDFY94,JC:GRJK00,EC:Shoup00a,EC:GJKR96,JC:GJKR07,RSA:GJKR03} have become a practical solution for distributed custody of digital assets, with recent constructions providing efficient signing protocols for Schnorr~\cite{C:Schnorr89}  signatures~\cite{SAC:KomGol20,C:BCKMTZ22,C:CGRS23,C:CriKomMal23,C:CKKTZ25,CiC:Chen25,EC:BDLR25,C:BDLR25,EC:GCRS26,C:KatReiTak24,EC:NioReiTak26,EPRINT:BCLTZ25} and ECDSA~\cite{ACNS:GenGolNar16,CCS:GenGol18,CCS:LinNof18,SP:DKLs19,ESORICS:DOKSS20,EPRINT:GKSS20,CCS:CGGMP20,PKC:CCLST20,SCN:DJNPO20,PKC:YueCuiXie21,SP:ANOSS22}.
However, existing threshold schemes assume that the threshold group constitutes the complete signer.
Lightning creates a fundamentally different setting, in which only one participant of an existing MuSig2 execution is thresholdized, while the remaining participant and the overall signing protocol remain unchanged.
This requires threshold signing to operate inside an existing multi-signature protocol, rather than replacing it.

None of the existing threshold constructions support this setting, since they either bind their nonce generation to the signing set, incur high round complexity, or require the message to be known at nonce creation time (cf.~\Cref{tab:related}).
We restrict this section to the schemes most relevant for deploying threshold custody in Lightning and refer to~\cite{C:CKKTZ25} for a broader survey of Schnorr threshold and multi-signature schemes.

\begin{table}
\centering
\tabfont
\tabcaption{Feature comparison of \Iceberg{} with most relevant threshold-Schnorr and
  nesting schemes (\cmark~yes, \xmark~no).
  $\dagger$~a \cmark{} denotes an assumed honest majority, and MuSig-DN is an $n$-of-$n$ multi-signature, so the axis does not apply~(\nax).}{tab:related}

\setlength{\tabcolsep}{6pt}
\fitbox{\tabscale\columnwidth}{%
\begin{tabular}{@{}lccccc@{}}
\toprule
 & \rot{FROST~\cite{SAC:KomGol20}} & \rot{Arctic~\cite{KG24}} & \rot{MuSig-DN~\cite{CCS:NRSW20}} & \rot{Seurin~\cite{Seu24}} & \rot{\textbf{\Iceberg}} \\
\midrule
Nestable in MuSig2                 & \xmark & \xmark & \xmark & \cmark & \cmark \\
Dynamic online signers             & \xmark & \cmark & \xmark & \cmark & \cmark \\
State-aware                        & \xmark & \xmark & \xmark & \cmark & \cmark \\
Deterministic nonces               & \xmark & \cmark & \cmark & \cmark & \cmark \\
Honest majority$^{\dagger}$        & \xmark & \cmark & \nax   & \cmark & \cmark \\
Complete, proven construction      & \cmark & \cmark & \cmark & \xmark & \cmark \\
Deployable in today's Lightning.   & \xmark & \xmark & \xmark & \xmark & \cmark \\
\bottomrule
\end{tabular}}
\tabcaptionend
\end{table}

The FROST family~\cite{SAC:KomGol20,C:BCKMTZ22,C:CGRS23,C:CKKTZ25} builds its aggregate nonce from the contributions of the quorum that signs, which binds the nonce to that quorum and rules it out for a nested endpoint.
Stateless schemes~\cite{C:GKMN21,C:KonOrlRoy23,PKC:FYZWYX25} remove the binding by deriving the nonce deterministically, and pay for it by taking the message as an input, and Arctic~\cite{KG24} is the most recent of them, and~\cite{CCS:NRSW20,EC:GCRS26} meet the same obstacle. 
As discussed in the introduction, fully stateless signers cannot be used for thresholding Lightning, as signing an old channel state would lead to punishment of the signer set. 
Yet, \Iceberg follows Arctic's pattern of deterministic nonces over replicated publicly-verifiable secret sharing, but seeds the derivation with the state of the channel rather than with the message and the signer set resolving this issue~(\cref{sec:to}).

\Iceberg operates inside a multi-signature protocol, making the multi-signature literature equally relevant to our setting.
Bellare and Neven~\cite{CCS:BelNev06} introduced the first Schnorr multi-signature in the plain public-key model, and MuSig~\cite{DCC:MPSW19} added key aggregation, allowing a group of signers to appear as a single signer on chain.
Achieving two-round Schnorr multi-signatures proved challenging, and Drijvers et al.~\cite{SP:DEFKLN19} showed that natural constructions are vulnerable, while MuSig2~\cite{NRS21}, standardized as BIP~327~\cite{BIP327}, and DWMS~\cite{C:AlpBur21} achieve two rounds by providing each signer with $\nu=2$ nonce commitments. MuSig-DN~\cite{CCS:NRSW20} instead derives nonces deterministically with public verifiability, at the cost of a zero-knowledge proof per signature. Subsequent works further improve the analysis or avoid rewinding-based proofs~\cite{PKC:DOTT21,AC:BelDai21,EC:TesZhu23a,EC:PanWag23,EC:PanWag24,C:BacWag25}.
Lightning uses MuSig2 as its signing protocol and does not permit replacing it with an alternative multi-signature scheme. Therefore, \Iceberg must operate within the MuSig2 execution itself rather than a different underlying construction. 

Placing a threshold on one side of a MuSig2 channel was sketched informally by Seurin~\cite{Seu24} in an unpublished note.
The first formal treatment of nesting a multi-signature scheme into MuSig2 was done by Kohen~\cite{Koh26}.
However Kohen nests a multi-signature $(n$-out-of-$n$) rather than a threshold signature ($t$-out-of-$n$) into MuSig2. 
Since in this setting all signers are required to participate in every signing session, the signer set remains fixed and the availability and compromise challenges that motivate Lightning custody are not addressed. 

\section{Technical Overview} \label{sec:to}
A Lightning channel between two parties locks their coins in a single on-chain output, a \emph{funding output} guarded by a $2$-of-$2$ multisignature. 
The two parties then transact without touching that output again, co-signing a sequence of \emph{commitment transactions} that each encode the current balance, where each new one revokes its predecessor. 
Should a party try to settle the channel on an old, revoked state, Lightning's \emph{penalty mechanism} lets the counterparty claim all of the channel's funds, and that threat keeps both sides committed to the latest balance. 
Bitcoin's Taproot upgrade lets this $2$-of-$2$ be realised with MuSig2~\cite{NRS21}, a two-round multi-signature scheme that aggregates several public keys into a single one. 
The two funding keys aggregate to one key $\tX$, and any spend both parties agree to, in particular a cooperative channel close, is a single Schnorr signature under $\tX$. 
On chain this is a key-path spend indistinguishable from an ordinary single-key payment, since the verifier sees one key and one signature, with no trace of the two parties behind $\tX$. 

Our goal is to turn one side of the channel from a single signer into a group of signers. 
Instead of a single funding key, we want $n$ parties, a user's devices or an exchange's servers, of which any $2t-1$ suffice to sign and which stays secure while fewer than $t$ are compromised. 
Throughout this overview, side $A$ is this group and side $B$ is the channel counterparty. 

Achieving this would be straightforward if we could redesign how Lightning channels work.
Our constraint is that we retrofit it into today's channels, so the scheme can be adopted with no change on the counterparty's side. 
This means $B$ keeps running plain two-party MuSig2 and the chain keeps verifying one Schnorr signature under $\tX$. 
The group must therefore occupy the key slot $A$ holds in the funding $2$-of-$2$ and, at each round, present exactly what a single MuSig2 signer would. 

\Cref{fig:nesting} shows the channel this produces, with side $B$ unchanged and side $A$ a group, and tracks what $B$ and the chain see at each stage. 
At key generation the group's $n$ parties present one group key $X$, which fills side $A$'s slot and aggregates with $X_B$ into the funding key $\tX$ just as two ordinary keys would. 
In the first round the group presents one nonce $R_A$, fixed before the message exists, which aggregates with $R_B$ into the session nonce $R$. 
In the second round the group presents one partial signature $s_A$, which adds to $s_B$ to give the final signature $\sigma = (R, s_A + s_B)$, on chain a single-key spend. 
The group presents one key, one nonce and one partial signature, so at no point can $B$ or the chain tell side $A$ from a lone MuSig2 signer. 
%
\begin{figure*}[t]
  \centering
  \resizebox{\linewidth}{!}{%
  \begin{tikzpicture}[
    font=\small,
    lane/.style={rounded corners=4pt, line width=0.6pt},
    badge/.style={circle, draw, thick, inner sep=0pt, minimum size=4.6mm,
                  font=\bfseries\footnotesize, fill=white},
    kb/.style={draw, thick, rounded corners=2pt, fill=white,
               minimum width=0.8cm, minimum height=0.58cm, font=\small},
    fund/.style={kb, fill=orange!12, draw=orange!65!black},
    mem/.style={draw, rounded corners=1.5pt, fill=teal!14,
                minimum size=0.46cm, inner sep=1pt, font=\scriptsize},
    memoff/.style={draw, rounded corners=1.5pt, draw=black!35, dashed,
                   fill=black!4, text=black!45, minimum size=0.46cm,
                   inner sep=1pt, font=\scriptsize},
    arr/.style={-{Latex[length=1.6mm]}, semithick},
    tag/.style={font=\scriptsize\itshape, text=black!55, align=center},
    note/.style={font=\scriptsize, align=center},
    sep/.style={dashed, line width=0.7pt, draw=black!45},
  ]
    \begin{scope}[on background layer]
      \fill[yellow!18, rounded corners=2pt] (8.95,-1.64) rectangle (10.95,2.48);
      \draw[lane, draw=purple!60!black, fill=purple!7]  (0,1.25) rectangle (16.3,2.48);
      \draw[lane, draw=teal!55!black,   fill=teal!7]    (0,-1.64) rectangle (16.3,1.03);
      \draw[sep] (4.70,-1.64) -- (4.70,2.48);
      \draw[sep] (8.95,-1.64) -- (8.95,2.48);
      \draw[sep] (10.95,-1.64) -- (10.95,2.48);
    \end{scope}

    \node[rotate=90, font=\bfseries\footnotesize, text=teal!55!black]   at (-0.32,-0.30) {Group $A$};
    \node[rotate=90, font=\bfseries\footnotesize, text=purple!60!black] at (-0.32, 1.86) {Side $B$};

    \node[badge] (h1) at (0.70,2.82) {1};
    \node[anchor=west, align=left] at (1.05,2.82)
       {\textbf{KeyGen}\\[-1.5pt]{\scriptsize\itshape\color{black!55}once per channel}};
    \node[badge] (h2) at (4.85,2.82) {2};
    \node[anchor=west, align=left] at (5.20,2.82)
       {\textbf{Round 1: PreRound}\\[-1.5pt]{\scriptsize\itshape\color{black!55}offline, before the message}};
    \node[badge] (h3) at (11.05,2.82) {3};
    \node[anchor=west, align=left] at (11.40,2.82)
       {\textbf{Round 2: SignRound}\\[-1.5pt]{\scriptsize\itshape\color{black!55}message known}};

    \draw[arr, line width=0.9pt] (0,-1.88) -- (16.3,-1.88);
    \node[font=\scriptsize] at (15.95,-1.72) {time};
    \draw[thin, black!55] (0.10,-2.02) -- (0.10,-2.14) -- (4.55,-2.14) -- (4.55,-2.02);
    \node[tag] at (2.32,-2.34) {setup: once};
    \draw[thin, black!55] (4.80,-2.02) -- (4.80,-2.14) -- (16.20,-2.14) -- (16.20,-2.02);
    \node[tag] at (10.50,-2.34) {one channel update (rounds 1 \& 2), repeated every update};

    \node[mem] (p1) at (0.80, 0.42) {$P_1$};
    \node[mem] (p2) at (0.80,-0.08) {$P_2$};
    \node[font=\scriptsize, text=black!55] at (0.80,-0.47) {$\vdots$};
    \node[mem] (pn) at (0.80,-0.88) {$P_n$};
    \node[kb] (Xk) at (2.45,-0.24) {$X$};
    \draw[arr] (p1) -- (Xk);
    \draw[arr] (p2) -- (Xk);
    \draw[arr] (pn) -- (Xk);
    \node[tag] at (2.45,-0.92) {group key};
    \node[note] at (2.35,-1.32) {$t$-of-$n$ shares $\Rightarrow$ key $X$};
    \node[kb] (XB) at (1.30,1.90) {$X_B$};
    \node[fund] (TX) at (3.45,1.90) {$\tX$};
    \draw[arr] (Xk.north) .. controls (2.45,1.25) and (3.0,1.55) .. (TX.south west);
    \draw[arr] (XB) -- (TX);
    \node[tag] at (3.8,1.40) {funding $2$-of-$2$};

    \node[mem] (q1) at (5.50, 0.42) {$P_1$};
    \node[mem] (q2) at (5.50,-0.08) {$P_2$};
    \node[font=\scriptsize, text=black!55] at (5.50,-0.47) {$\vdots$};
    \node[memoff] (qn) at (5.50,-0.88) {$P_n$};
    \node[kb] (RA) at (7.15,-0.24) {$R_A$};
    \draw[arr] (q1) -- (RA);
    \draw[arr] (q2) -- (RA);
    \draw[arr] (qn) -- (RA);
    \node[tag] at (7.15,-0.92) {group nonce};
    \node[note] at (7.05,-1.32) {nonce shares $\Rightarrow$ one $R_A$};
    \node[kb] (RB) at (6.00,1.90) {$R_B$};
    \node[kb] (Rc) at (8.15,1.90) {$R$};
    \draw[arr] (RA.north) .. controls (7.15,1.25) and (7.70,1.55) .. (Rc.south west);
    \draw[arr] (RB) -- (Rc);
    \node[tag] at (8.55,1.40) {session nonce};

    \draw[thick] (9.95,0.58) circle (3.2mm);
    \draw[thick] (9.95,0.58) -- (9.95,0.79);
    \draw[thick] (9.95,0.58) -- (10.11,0.51);
    \node[note, text=black!75] at (9.95,-0.38)
       {\textbf{time passes}\\[1pt] signers may go\\ offline; the signing\\ quorum may change};

    \node[note] (bn) at (12.05,1.86) {$B$ sends\\ $m$, $s_B$};
    \node[mem]    (r1) at (11.55, 0.42) {$P_1$};
    \node[memoff] (r2) at (11.55,-0.08) {$P_2$};
    \node[font=\scriptsize, text=black!55] at (11.55,-0.47) {$\vdots$};
    \node[mem]    (rn) at (11.55,-0.88) {$P_n$};
    \node[kb, minimum width=1.15cm] (SA) at (13.35,-0.28) {$\signagg$};
    \draw[arr] (r1) -- (SA);
    \draw[arr] (rn) -- (SA);
    \node[kb] (sA) at (14.85,-0.28) {$s_A$};
    \draw[arr] (SA) -- (sA);
    \node[note] at (13.45,-1.34) {a different quorum signs\\ ($P_2$ offline): partials $\Rightarrow$ $s_A$};
    \node[fund, align=center, minimum width=2.45cm] (SIG) at (14.85,1.84)
       {$\sigma=(R,\,s_A\!+\!s_B)$\\[-1pt]{\scriptsize\normalfont\color{black!55}on-chain: one key}};
    \draw[arr] (sA.north) -- (SIG.south);
    \draw[arr] (bn.east) -- (SIG.west);
  \end{tikzpicture}}
  \caption{A thresholdized Lightning channel. \textbf{(1)~KeyGen} (once per channel)
    fixes the group key $X$ from replicated VPSS shares of $P_1,\dots,P_n$; $X$ fills one
    slot of the funding $2$-of-$2$ $\tX=\keyagg(\{X,X_B\})$, on chain a single key-spend.
    Each update then runs two rounds: \textbf{(2)~PreRound}, offline and message-free,
    turns the session identifier $\sid$ into one group nonce $R_A$ (the same for every
    quorum) that joins $B$'s $R_B$ into $R$; \textbf{(3)~SignRound}, once $m$ is known, has
    the online quorum output partials that $\signagg$ combines into $s_A$, yielding the
    MuSig2 signature $\sigma=(R,s_A+s_B)$ under $\tX$. Between the rounds signers may go
    offline, so the quorum that precommits the nonce need not be the one that signs.}
  \label{fig:nesting}
\end{figure*}

\paragraph{Our setting}
The Lightning protocol and the practical deployment context together shape what such a group can look like. 
Some of the following constraints may seem cherry-picked, but the rest of this section shows that they are already required of a threshold lightning construction. 

\begin{constraintlist}
\item\label{cstr:offline} \textbf{Signers go offline.} 
Keys may sit in cold storage, so an honest member is routinely absent rather than malicious. 
The quorum that begins a signing session need not be the one that finishes it, and absent members cannot simply be counted as corruptions. 

\item\label{cstr:premsg} \textbf{Nonces precede the message.} 
In the simple taproot channels protocol~\cite{STC}, each side exchanges its MuSig2 nonce a full round-trip before the commitment transaction it will sign even exists. 
The group must therefore commit to a single nonce $R_A$ before knowing the message $m$, and since the signing quorum may not yet be fixed at that point, the nonce cannot depend on who will eventually sign. 

\item\label{cstr:nonce-fixed} \textbf{Nonces cannot change.} 
Once $R_A$ is sent to the counterparty it is fixed for this session. The outer MuSig2 session is unmodified BIP~327, so the group cannot supply a different nonce after the fact. 
Should the quorum that finishes the second round differ from the one that started the first, the nonce already committed to $B$ must stay the same. 

\item\label{cstr:consensus} \textbf{The group must agree on the live state.} 
Were a corrupt minority of $t-1$ members to convince an honest one to sign a superseded commitment transaction, the counterparty would invoke the penalty mechanism and claim all channel funds. 
The group must therefore reach consensus on the current commitment number before signing. 

\item\label{cstr:small} \textbf{Signer sets are small.} 
A typical group is a user's few devices or the handful of keys an exchange distributes across staff and servers, and many more would be operationally unwieldy. 
We therefore expect $n$ to be at most a dozen.
\end{constraintlist}

Offline signers and small signer sets are practical assumptions rather than protocol constraints, but they are realistic for the deployments we envision. 
Consensus on the live state carries an immediate consequence for the group's parameters. Agreeing with up to $t-1$ faulty members is Byzantine agreement, which requires a strict honest supermajority and thus $n \geq 3t-2$.
\Cref{tab:thresholds} lists, for each \Iceberg threshold up to a group of twelve, the corruptions it tolerates, the online quorum it needs, and the smallest group that supports it.

\begin{table}
\centering
\tabfont
\tabcaption{Possible \Iceberg thresholds. 
A threshold $t$ tolerates up to $t-1$ corruptions, needs a signing quorum of $2t-1$ online members, and requires a group of at least $n \geq 3t-2$.}{tab:thresholds}

\setlength{\tabcolsep}{8pt}
\fitbox{\columnwidth}{%
\begin{tabular}{@{}cccc@{}}
\toprule
Threshold $t$ & Corruptions & Signing quorum & Minimal group \\
\midrule
2 & 1 & 3 & 4 \\
3 & 2 & 5 & 7 \\
4 & 3 & 7 & 10 \\
5 & 4 & 9 & 13 \\
\bottomrule
\end{tabular}}
\tabcaptionend
\end{table}

\paragraph{Strawman approaches} Now that we have established the setting, let us explore why existing threshold and multi-signature schemes do not present solutions to the task of constructing a thresholdized Lightning signer. 
We consider two strawman approaches, MuSig2~\cite{NRS21} and FROST~\cite{SAC:KomGol20}. 

MuSig2 seems like the natural starting point, since the channel already uses it. 
And indeed one MuSig2 instance can be nested inside another, since the inner group's aggregate key is itself an ordinary public key and can occupy side $A$'s slot while the group runs MuSig2 internally~\cite{Koh26}. 
This does not solve our problem, though, because nesting only splits side $A$ into an $n$-of-$n$ group, where every one of the $n$ parties must be online for each signing. 
Tolerating offline signers is one of our goals as is improving key management, so a multi-signature is not enough, and we need a genuine $t$-of-$n$ threshold. 

This calls for a threshold signature, which produces a signature from any $t$ of the $n$ parties and lets the absent ones stay offline. 
We try FROST~\cite{SAC:KomGol20}, the standard such scheme, to see where threshold Schnorr signatures break when nested into Lightning. 
To sign in FROST, each signer commits two nonces $(D_i, E_i)$. A signing quorum $C$ forms the aggregate nonce $R = D\,E^{b}$ from the products $D, E$ of these commitments and a binding factor $b$ that hashes the quorum $C$ and the message $m$, and the partial signatures are combined by Lagrange interpolation over $C$. 
The nonce $R$ therefore depends on which members form the quorum $C$, and this is not specific to FROST. 
Most threshold Schnorr schemes relying on Shamir secret sharing~\cite{Shamir79} build the aggregate nonce from fresh per-signer contributions $R_i$ whose discrete logarithm only the contributing signer knows, so the nonce is tied to the particular set of signers that produced it.
A different quorum brings different contributions and reconstructs a different $R$~\cite{C:CriKomMal23,SAC:KomGol20}. 
The group commits the single nonce $R_A$ to the counterparty in the first round, but in Lightning (and more generally in most threshold-nested signing settings) a different quorum may finish the second round, and it would reconstruct a different $R_A$ than the one already promised to $B$. 

\paragraph{Design decision 1, replicated secret sharing} 
FROST and its relatives~\cite{SAC:KomGol20} violate two requirements that nesting places on the group's nonce. 
First, the nonce must be the same whichever quorum produces it, or the second round would contradict the $R_A$ already committed in the first~(\cref{cstr:nonce-fixed}).
Second, it must remain producible by a member that was offline in the first round, since the finishing quorum need not be the starting one~(\cref{cstr:offline}). With nonces committed or shared in the first round, an absent member holds nothing and cannot sign in the second.
We meet both by deriving the nonce \emph{deterministically} from \emph{replicated} shares. 

Replicated secret sharing~\cite{ItoSaiNis87} splits the secret into one share $\phi_{\aset}$ for each subset $\aset$ of $t-1$ members, distributed to every member \emph{outside} $\aset$, so that the secret is their sum $\sk = \ssum_{\aset} \phi_{\aset}$.
Any subset of $t-1$ members, $\aset$, miss precisely the share indexed by $\aset$ and learn nothing about the aggregate secret, while any larger set holds every summand $\phi_{\aset}$.
Because the secret is this one fixed sum and every qualifying quorum holds all of its terms, reconstruction does not depend on who is present, unlike the per-signer aggregates of FROST or Sparkle that are assembled from whoever happens to sign. 
 
Instead of sampling a fresh nonce and sharing it in the first round, each member derives its contribution from its long-lived shares through a fixed pseudorandom function, evaluated on a per-session input, in the manner of pseudorandom secret sharing~\cite{TCC:CraDamIsh05}.
Nothing is exchanged in the first round, so a member absent then can still recompute its contribution in the second from shares it has held since key generation~\cite{TCC:CraDamIsh05}. 

Put together, every quorum feeds the same shares into the same function and (given the same seed) obtains the same nonce, available to any qualifying set at any time. 

The primary trade-off of this approach is storage complexity.  
Each member holds one share per $(t-1)$-subset it lies outside of, resulting in $\binom{n-1}{t-1}$ shares.
As the number of shares grows quickly with $n$, replicated sharing is  dismissed as impractical for large deployments. 
In our setting it is not, because we expect signer sets to be small~(\cref{cstr:small}), so the share count stays modest. 

\paragraph{Design decision 2, unique pre-message nonces} 
A deterministic nonce is a function of its seed, so every nonce requirement is a seed requirement. 
A deterministic nonce is safe only if it is never reused across two different messages. If a group were to sign two messages under one nonce $R$, the two partial signatures $s = r + c\,x$ and $s' = r + c'\,x$ on challenges $c \neq c'$ would, by the two-special-soundness of Schnorr, expose the signing key as $x = (s-s')/(c-c')$. 
Therefore, the seed must vary whenever the challenge does. 

The literature derives the nonce from the message itself~\cite{KG24}, so an identical message gives an identical nonce and signing the same message twice merely repeats one signature rather than exposing a key. 
We cannot follow this, because the nonce is fixed in the first round, before the message, the commitment transaction to be signed, exists~(\cref{cstr:premsg}).
Instead we use our Lightning protocol setting's state as part of our seed. Every commitment transaction carries a commitment number that is unique and strictly increasing over the channel's life, and it is known before the transaction is assembled. 
We seed the nonce with this number as a part of the session identifier $\sid$, giving a nonce that is fixed before the message yet still varies from one channel state to the next, at no added cost. 

Since the derivation depends on $\sid$ alone, two commitment transactions that carry the same commitment number share a nonce, and the key leakage above returns.

\paragraph{Design decision 3, consensus on the live state} 
The gap left by deterministic nonces is that one $\sid$ may be signed under two messages, as nothing in the derivation prevents the group from being driven to sign two different commitment transactions under the same commitment number.
If that happens, the two partial signatures share the nonce $R$ but carry distinct challenges, and that leaks the signing key. 

Nesting makes this worse, because the outer session runs MuSig2 with $\nu = 2$ nonces per signer.
Two signatures on one $\sid$ do not merely repeat but place the attacker in the concurrent ROS setting~\cite{EC:BLLOR21} that $\nu = 2$ exists to resist, with the group as the target.
The defence against that setting is more nonces, and we cannot add them, since the outer session is unmodified BIP~327.
We therefore cannot prevent the reuse within our construction.

Instead, we use a mechanism any thresholdized channel already relies on.
The group must agree on the current commitment number regardless of our scheme~(\cref{cstr:consensus}), since signing a superseded state lets the counterparty invoke the penalty mechanism and seize the channel.
An honest member therefore signs at most one message under any one $\sid$.
The bound $n \geq 3t-2$ secures an honest supermajority, which keeps a corrupt minority from forcing a second one, preventing nonce reuse.
If an implementation does want to retry a failed signing attempt at the top-level MuSig2, it has to agree on the protocol messages that change the $\sid$.

The channel already requires this agreement, so this method of keeping our deterministic nonces safe comes with no additional cost for our construction.

\paragraph{\Iceberg{}} 
Putting these three design choices together yields a $t$-of-$n$ signing protocol that interacts with the outer session like a single MuSig2 participant. 
At key generation, the group's secret key $\sk$ is shared among the members with replicated secret sharing~\cite{ItoSaiNis87}. 
The key is split additively into $\binom{n}{t-1}$ summands $\sk = \ssum_{\aset} \phi_{\aset}$, one field element $\phi_{\aset} \in \Zpp$ per $(t-1)$-sized subset $\aset$ of the members, and each $\phi_{\aset}$ is handed to every member \emph{outside} $\aset$.
A member thus holds the $\binom{n-1}{t-1}$ summands whose subset excludes it, and any $t$ members jointly hold all summands and so could reconstruct $\sk$, while any $t-1$ miss the summand for their own subset and learn nothing.
The members' shares determine a single group key $X = g^{\sk}$, which occupies side $A$'s slot in the funding $2$-of-$2$. 
The shares can be set up by a trusted dealer or a distributed key generation, and the same shares later serve both keys and nonces. 

The first round runs offline for the current session identifier $\sid$, before the message exists.
For each of the two nonce slots $i \in \{1,2\}$, every member evaluates a pseudorandom function on $i \concat \sid$ under each of its shares, $\Hprf(\phi_{\aset}, i \concat \sid)$, and combines the results into its two nonce shares $(R_{1,k}, R_{2,k})$, which it broadcasts.
A present quorum first checks the shares for well-formedness, that the replicated copies of each $\phi_{\aset}$ produced the same value, and then aggregates them into the group's two nonces, bound together into the nonce $R_A$ handed to the counterparty.
Since the nonce shares are a deterministic function of $\sid$ alone and every share $\phi_{\aset}$ is replicated across all members outside $\aset$, any qualifying quorum recomputes the same $R_A$, so a member absent in this round is not locked out of the next.

In the second round, once the message $m$ and the counterparty's nonce arrive, each member forms the binding factor $\check{b}$ and the challenge $c$, both computable from the public transcript, and returns its partial signature 
\[
  s_k = r_{1,k} + r_{2,k}\,\check{b} + c\,a\,x_k, 
\]
where $r_{i,k}$ are its nonce-share scalars from the first round and $x_k$ its signing-key share. 
Lagrange interpolation over the quorum turns the $\{s_k\}$ into side $A$'s single partial signature $s_A$, which the outer MuSig2 session adds to $s_B$ to form the final signature. 
The protocol is two rounds, and the first is precomputable before the message exists. This is the same as the round structure of MuSig2 itself, so the group adds no rounds to the channel protocol. 

\subsection{Proving Security}
Proving security for a nested threshold multi-signature cannot be done with existing security models, as nested threshold multi-signatures neither fit pure threshold nor standard multi-signature frameworks. 
Instead, they represent a threshold group occupying a single slot within an outer multi-signature scheme.  
Threshold unforgeability games verify a forgery under the group's own key, so the group is the entire signing side and no outer session exists. 
The recent work of Kohen nests a MuSig2 session inside of MuSig2~\cite{Koh26}, where the inner signing session verifies against an outer aggregate key.
We need exactly that, but the nested signer there is itself a multi-signature over a fixed set of members, and every one of them signs every session. 
Nesting a threshold signature adds three things that \cite{Koh26} does not model, namely a corruption threshold inside the nested signer, a quorum that may differ between the two rounds~(\cref{cstr:offline}), and agreement on session state that the channel forces on the group~(\cref{cstr:consensus}). 
We define unforgeability for nested threshold multi-signatures by modifying the security model of~\cite{Koh26} accordingly, so that the adversary controls every outer co-signer, corrupts up to $t-1$ members of the group, and opens concurrent signing sessions on messages and outer contexts of its choosing.
In addition, the adversary can decide to switch the signing quorum from one round to the next. 
The adversary breaks unforgeability of a nested threshold multi-signature if it can produce a signature that verifies under an outer aggregate key without having collected enough honest contributions from the remaining honest parties to have assembled one itself.

Our final goal is now to prove security of \Iceberg with this new security model. 
First, we observe that by using replicated secret sharing to make each protocol round quorum-independent, the $t$-out-of-$n$ shared messages exchanged during an \Iceberg{} round can be mapped directly to an \emph{additive} combination of shares. 
Recall that under replicated secret sharing, a secret key summand is defined for each $(t-1)$-sized subset of signers, and each party receives precisely the shares corresponding to the subsets it does \emph{not} belong to~\cite{ItoSaiNis87}. 
This yields $\binom{n}{t-1}$ underlying key summands in total. 
As all protocol messages in \Iceberg{} are deterministicly computed from the shared key, the messages exchanged among any valid quorum in \Iceberg{} can be deterministically expanded into $\binom{n}{t-1}$ individual protocol messages. 
This mapping allows us to convert the $t$-out-of-$n$ threshold signature \Iceberg{} into a virtual $\binom{n}{t-1}$-out-of-$\binom{n}{t-1}$ multi-signature scheme, which we denote as \scheme{0}. 
In~\cref{lem:sharconv}, we formalize this share conversion and prove that any adversary breaking the unforgeability of \Iceberg{} can be transformed into an efficient adversary breaking the unforgeability of \scheme{0}. 

It is thus sufficient to show that \scheme{0} is unforgeable. 
We achieve this via a reduction from \scheme{0} to the nested multi-signature scheme $\nestedmusig$~\cite{Koh26}. 
Specifically, we embed the $\eufcma$ challenge key of the honest signer in the security experiment of $\nestedmusig$ into a single secret summand of \scheme{0} unknown to the threshold adversary. 
We then simulate all remaining computations by combining honest evaluations over known summands with queries to the $\eufcma$ signing oracle for the challenge summand. 
Careful programming of the aggregation hash function $\Hagg$ ensures that the partial signatures returned by the oracle match the expected distributions in \scheme{0}. 
Finally, because $\nestedmusig$ comes with security proofs in both the Random Oracle Model (ROM, using $\nu=8$ at depth 2) and the ROM + Algebraic Group Model (AGM, using $\nu=2$ as in BIP 327), \scheme{0} (and by extension \Iceberg{}) inherits both security bounds. 

In all, we show that \Iceberg{} fits threshold Lightning's operational constraints demonstrating that provably secure threshold custody for Lightning channels is achievable.

\section{Preliminaries} \label{sec:prelims}

\mypar{Notation}
We write $x \gets y$ for assignment and $x \sample X$ for uniform sampling.
A group description $\gparam$ denotes a cyclic group $\GG$ of prime order $p$ with generator $g$.
For $n \in \NN$ we write $\interval{n} = \{1,\ldots,n\}$, $\concat$ for the concatenation of bitstrings, and $\verts{S}$ for the cardinality of a set $S$.
For a group of $n$ members with threshold $t$, subsets of size $t-1$ are set in bold, so $\asets = \sbinom{\interval{n}}{t-1}$ is the family of all $(t-1)$-subsets of $\interval{n}$, and $\asetsof{k} = \{\aset\in\asets : k\notin\aset\}$ are the subsets whose shares member $k$ holds.

\paragraph{Communication model}
We work in the synchronous communication model. 
The endpoints of the higher-level multi-signature are connected by pairwise authenticated channels. 
The parties of the thresholdized endpoint are likewise connected by pairwise authenticated channels. 
They further use an untrusted aggregator node, which may be any of the participants or another party that facilitates communication (though this leaks privacy if another party is used).

\mypar{Multi-signatures}
A multi-signature scheme lets $n$ signers, each holding an ordinary key pair, jointly produce a single signature verifiable under one aggregate key.
It provides a key generation $\keygen$ that outputs a signer's key pair, a key aggregation $\keyagg$ that maps a list of public keys to one aggregate key $\tX$, an interactive signing protocol, and a verification $\ver$.
The signature is an ordinary one under $\tX$, so on chain the group is indistinguishable from a single key.

\mypar{MuSig2}
MuSig2~\cite{NRS21} instantiates this over $\GG$ and is the scheme Lightning uses.
Signer $i$ holds $X_i = g^{x_i}$, and $\keyagg$ maps a key list $L$ to $\tX = \sprod_i X_i^{a_i}$ with coefficients $a_i = \Hagg(L, X_i)$ that bind each key to the whole list.
Signing runs in two rounds.
The first round is message-independent, and each signer samples $\nu$ nonces and sends $R_{i,j} = g^{r_{i,j}}$ for $j \in \interval{\nu}$ (BIP~327 uses $\nu = 2$), and the nonces are aggregated slot-wise to $R_j = \sprod_i R_{i,j}$.
In the second round, once the message $m$ is known, a binding factor $b = \Hnon(\tX, (R_1,\ldots,R_\nu), m)$ combines the aggregates into the session nonce $R = \sprod_j R_j^{b^{j-1}}$, and each signer returns the partial signature $s_i = c\,a_i\,x_i + \ssum_j r_{i,j}\,b^{j-1}$ for the challenge $c = \Hsig(\tX, R, m)$. The partials sum to $s$, and $\sigma = (R, s)$ verifies as a Schnorr signature under $\tX$.
Since $\tX$ is itself an ordinary public key, one MuSig2 aggregate key can occupy a signer slot of another, a \emph{nested} MuSig2~\cite{Koh26}, which is the structure \Iceberg uses to hide a group inside one side of the channel.
We recall the pseudocode of the nested scheme $\nestedmusig$ in \cref{fig:nestedmusig} in \cref{sec:appdx:prelims}.

\mypar{Unforgeability}
A (nested) multi-signature is EUF-CMA secure if no efficient adversary that controls every signer but one honest signer, and that opens concurrent signing sessions of its choice, can forge a signature under an aggregate key involving the honest key on a message it never asked the honest signer to sign.
We recall the formal game in \cref{fig:game:eufcma} in \cref{sec:appdx:prelims}.
Our security reduces to the EUF-CMA security of $\nestedmusig$, which holds under standard assumptions~\cite{Koh26}, and \cref{sec:security} makes the reduction and its instantiation precise.

\mypar{Verifiable pseudorandom secret sharing}
A verifiable pseudorandom secret sharing (VPSS) \cite{KG24} is a $t$-of-$n$ secret sharing whose long-lived shares let the members repeatedly derive sharings of pseudorandom values, one for each public tag $w$, without interaction.
It provides a key generation $\keygen(n, t, \mu)$ that distributes long-lived shares $\sk_1,\ldots,\sk_n$ of a secret $\sk$, a share derivation $\gen(k, \sk_k, w)$ that outputs member $k$'s share for tag $w$, and algorithms $\vpssver{1}$, $\agg$, and $\recover$ that check the shares of a quorum and interpolate the derived value in the exponent or in the scalar, where the quorum parameter $\mu$ fixes how many shares verification needs.
We require three properties. Under \emph{verifiability} a quorum can check that the shares it received are consistent, under \emph{uniqueness} every qualifying quorum obtains one and the same value for each tag, and under \emph{pseudorandomness} the derived values are indistinguishable from random to any coalition of at most $t-1$ members.

\mypar{A VPSS from replicated secret sharing}
We use the construction of~\cite{KG24}, in which every share doubles as a key of a pseudorandom function.
Key generation splits the secret $\sk = \ssum_{\aset\in\asets}\phi_{\aset}$ into one summand $\phi_{\aset}\sample\Zpp$ per subset $\aset\in\asets$ and hands $(\aset, \phi_{\aset})$ to every member outside $\aset$, so member $k$ holds $\sk_k = \{(\aset,\phi_{\aset})\}_{\aset\in\asetsof{k}}$.
Given a tag $w$, $\gen$ evaluates $\Hprf$ under each held summand and combines the results into member $k$'s share. Across members these are Shamir shares of degree $t-1$ of the value $\ssum_{\aset}\Hprf(\phi_{\aset}, w)$.
Two features of this construction drive our design.
First, $\gen$ is deterministic, so a member can derive its share for any tag at any time from $\sk_k$ alone.
Second, for a quorum parameter $\mu \geq 2t-1$, verifiability and uniqueness hold information-theoretically, since a quorum of $\mu$ members contains at least $t$ honest shares, which determine the degree-$(t-1)$ polynomial and expose every inconsistent share.
Whenever $\vpssver{1}$ accepts, $\agg$ therefore returns $g^{\ssum_{\aset}\Hprf(\phi_{\aset}, w)}$, the same value for every qualifying quorum.
We recall the scheme, denoted $\vpss{1}$, in \cref{fig:vpss1} in \cref{sec:appdx:prelims}, and a variant $\vpss{0}$ without share conversion appears in \cref{sec:security}.

\section{Nested Threshold Multi-Signatures}
\label{sec:definition}

We propose a new cryptographic primitive, namely \emph{nested threshold multi-signatures}. 
Unlike conventional threshold signatures, which replace an entire signer with a threshold group, a nested threshold multi-signature allows a $t$-of-$n$ group of members to occupy a single signer slot of an outer multi-signature protocol.
This enables threshold signing within an existing multi-signature execution while leaving the outer protocol unchanged.

We define the primitive with respect to a two-round multi-signature scheme~(\cref{sec:prelims}) with key aggregation $\keyagg$ and verification $\ver$, whose signing session we call the \emph{outer session}.
We state syntax and security for one level of nesting, the setting of \cref{sec:to}, and deeper nestings generalize as in~\cite{Koh26}.

\paragraph{Syntax}
A nested threshold multi-signature scheme extends a two-round threshold signature scheme with interfaces that connect the inner threshold execution to the outer multi-signature session.
We follow the two-round threshold-signature formalization of~\cite{C:CKKTZ25} and the multi-signature formalization of~\cite{CCS:NRSW20}. 
Nesting changes the interfaces in three places.
First, each round ends in an aggregation algorithm that condenses the members' shares into the single message the outer session expects from the group's slot.
Second, the signing round receives the entire outer context in which the group signs.
Third and most importantly, the two rounds do not share per-member state, and they are linked only through a session identifier $\sid$, so a member absent in the first round can still serve in the second~(\cref{cstr:offline}). 
\begin{figure}[!t]
 \centering
 \shrinkfit{\gamescale}{%
 \begin{varwidth}{2\columnwidth}
 \begin{pcvstack}[boxed,center]
  \pcsetargs{linenumbering}
  \procedure{Game $\pcgame{\tsmseufcma}{\Sigma}$}{%
   \rule{0pt}{\baselineskip}
   \param \gets \setup(\secparam) \\
   (n, t, \corrupt) \gets \adv(\param)\\
   \pcassert \verts{\corrupt} < t\\
   \honest \defeq \interval{n}\setminus\corrupt\\
   PK \defeq \emptyset\pcsc SK\defeq \emptyset\pcsc Started \defeq \emptyset\pcsc SID \defeq \emptyset\pcsc M \defeq \emptyset\pcsc Q \defeq \emptyset\\
   (L^*, m^*,\sigma^*) \gets \adv^{\keygeno,\preroundo,\signo}() \\
   \pcassert \exists i^*, PK[i^*]\in L^*\\
   \tX^* \defeq \keyagg(L^*)\\
   \pcreturn \ver(\tX^*,m^*,\sigma^*)=\tr\land \forall \sid,\verts{Q[m^*, \sid]} < t - \verts{\corrupt}
  }
  \pcvspace
  \procedure{Oracle $\keygeno(i)$}{
    \rule{0pt}{\baselineskip}
    \pcassert i\in\honest\land i\notin Started\\
    Started \defeq Started \cup \{i\}\\
    \pclinecomment{Interactively run $\keygen(i)$ with $\adv$ controlling all network connections}\\
    (\pk_i, \sk_i) \gets \langle \keygen(i),\adv\rangle\\
    PK[i] \defeq \pk_i\pcsc SK[i] \defeq \sk_i\\
    \pcreturn \pk_i
  }
  \pcvspace
  \procedure{Oracle $\preroundo(\sid, i)$}{%
    \rule{0pt}{\baselineskip}
    \pcassert i\in\honest \land PK[i]\neq \bot \land \sid \in \sidspace\\
    \pcassert SID[i][\sid] = \bot \pcsc SID[i][\sid] \defeq \text{ready}\\
    \msg_i \gets \preround(i, SK[i], \sid)\\
    \pcreturn \msg_i
  }
  \pcvspace
  \procedure{Oracle $\signo\left(\sid, i, m, \msg, \ctx\right)$}{%
    \rule{0pt}{\baselineskip}
    \pcassert i\in\honest\\
    \pcassert SID[i][\sid] = \text{ready}\\
    \pcassert M[\sid] \in \{\bot, m\}\pcsc M[\sid] \defeq m \pccomment{consensus on the live state}\\
    s_i \gets \signround\left(i, SK[i], PK[i], \sid, m, \msg, \ctx\right) \\
    SID[i][\sid] \defeq \text{done}\\
    Q[m, \sid] \defeq Q[m, \sid]\cup \{i\}\\
    \pcreturn s_i
  }
 \end{pcvstack}
 \end{varwidth}}
 \caption{The TS-MS-EUF-CMA security game for a nested threshold multi-signature scheme $\Sigma$, where $\keyagg$ and $\ver$ are the key aggregation and verification of the outer scheme. The assertion on $M[\sid]$ models the consensus on the live state~(\cref{cstr:consensus}), so across all honest members, each session identifier is signed on at most one message.}
 \label{fig:game:tsms}
\end{figure}
A nested threshold multi-signature scheme thus consists of the six algorithms $(\setup, \allowbreak \keygen, \allowbreak \preround, \allowbreak \preagg, \allowbreak \signround, \allowbreak \signagg)$ with the following interfaces.
\begin{itemize}
  \item \defitem{$\setup(\secparam) \to \param$:} reads the security parameter and outputs public parameters $\param$, which fix the space $\sidspace$ of admissible session identifiers and are an implicit input to all other algorithms.
  \item \defitem{$\keygen(k, n, t, A) \to (\pk, \sk_k)$:} run by member $k$ on the group size $n$ and the threshold $t$, interacting with the other members through the interface $A$. It outputs the group's public key $\pk$, which fills the group's slot of the outer key list, and $k$'s secret share $\sk_k$.
  \item \defitem{$\preround(k, \sk_k, \sid) \to \msg_k$:} the message-independent first round, which reads a session identifier $\sid \in \sidspace$ and outputs member $k$'s share $\msg_k$ of the group's first-round message.
  \item \defitem{$\preagg(\pk, C, \{\msg_j\}_{j\in C}) \to \msg$:} verifies and aggregates the first-round shares of a quorum $C$ into the group's first-round message $\msg$ in the outer session.
  \item \defitem{$\signround(k, \sk_k, \pk, \sid, m, \msg, \ctx) \to s_k$:} the second round, which reads the message $m$, the group's aggregate first-round message $\msg$, and the outer context $\ctx$, consisting of the group's position in the outer key list, the co-signers' public keys, and the outer session's aggregate first-round message. It outputs member $k$'s partial signature $s_k$.
  \item \defitem{$\signagg(C, \{s_k\}_{k\in C}) \to s$:} combines the partial signatures of a quorum $C$ into the group's second-round message $s$.
\end{itemize}
A nested threshold multi-signature scheme has no verification algorithm of its own. Once the outer session combines the group's second-round message with those of the co-signers, the result is an ordinary signature of the outer scheme under the aggregate key $\tX$ and is verified by $\ver$.

\paragraph{Correctness}
Fix a quorum size $\mu \geq t$, the number of shares its aggregation algorithms require.
Correctness requires that an honest outer session completes to a valid signature under $\tX$. For every quorum $C$ of at least $\mu$ honest members, $\preagg$ and $\signagg$ produce a first-round and a second-round message that a single signer of the outer scheme holding the secret key of $\pk$ could have produced, and these messages are the same for every qualifying quorum, as \cref{cstr:nonce-fixed} demands.

\paragraph{Security}
We define unforgeability of nested threshold multi-signature schemes through the experiment TS-MS-EUF-CMA, and our notation follows~\cite{Koh26}.
The adversary statically corrupts fewer than $t$ members, controls all signers of the outer session, and controls the network during the interactive key generation of the honest members.
It interacts with the honest members through a key generation oracle $\keygeno$ and two signing oracles $\preroundo$ and $\signo$, which return a member's first-round share for a session identifier $\sid \in \sidspace$ of the adversary's choice, and the member's partial signature for a message and an outer context of the adversary's choice.
Across all honest members, each session identifier is signed on at most one message, modeling the consensus on the live state~(\cref{cstr:consensus}).
The adversary wins if it outputs a fresh forgery, that is a signature that the outer scheme verifies under the aggregate key of a key list containing the group key, on a message $m^*$ such that, for every $\sid$, fewer than $t - \verts{\corrupt}$ honest members completed the second round.
We define $\advantage{\tsmseufcma}{\adv,\Sigma} = \Pr\left[\pcgame{\tsmseufcma}{\Sigma} = \tr\right]$ and provide the formal experiment in \cref{fig:game:tsms}. Note that since we assume that $\binom{n}{t}$ is small by \cref{cstr:small}, an adversary that fully adaptively corrupts signers can be reduced to a static security by a standard guessing argument.

\section{Our Scheme} \label{sec:construction}
We describe our nested threshold multi-signature scheme $\scheme{1}$. 
For a high-level overview, we refer to \cref{sec:to} and to \cref{fig:iceberg1} for pseudocode.

\begin{wfigure}
 \centering
 \fitbox{\linewidth}{%
 \begin{varwidth}{3\linewidth}
 \begin{pcvstack}[boxed,center]
  \pcsetargs{linenumbering}
  \begin{pchstack}
   \begin{pcvstack}
     \procedure{$\setup(\secparam)$}{
       \rule{0pt}{\baselineskip}
       \param \defeq \gparam \gets \grgen(\secparam)\\
       \pcreturn \param
     }
     \pcvspace
     \procedure{$\keygen(k, n, t, A)$}{
       \rule{0pt}{\baselineskip}
       \pcassert n\geq 2t-1\\
       \sk_k \gets \vpss{1}.\keygen(n, t, 2t-1)[k]\\
       (x_k, X_k)\defeq \vpss{1}.\gen(k, \sk_k, w_0)\\
       (C, \{X_j\}_{j\in C}) \gets A(X_k)\\
       \pcassert \vpssver{1}(t, 2t-1, C, \{X_j\}_{j\in C}) = 1\\
       X \defeq \vpss{1}.\agg(t, 2t-1, C, \{X_j\}_{j\in C})\\
       \pcreturn (X, \sk_k)
     }
   \end{pcvstack}
   \pchspace
   \begin{pcvstack}
     \procedure{$\preround(k, \sk_k, \sid)$}{
       \rule{0pt}{\baselineskip}
       \pclinecomment{$\sid$ is assumed to have a fixed length}\\
       \pcfor i\in\interval{\nu}\pcdo\\
       \t (r_{i,k}, R_{i,k}) \defeq \vpss{1}.\gen(k, \sk_k, i\concat \sid)\\
       \pcreturn (R_{1,k},\ldots, R_{\nu, k})
     }
     \pcvspace
     \procedure{$\preagg(\pk, C, (R_{i,j})_{i\in\interval{\nu},j\in C})$}{
       \rule{0pt}{\baselineskip}
       \pcfor i\in\interval{\nu}\pcdo\\
       \t \pcassert \vpssver{1}(t, 2t-1, C, \{R_{i,j}\}_{j\in C}) = 1\\
       \t R_i' \defeq \vpss{1}.\agg(t, 2t-1, C, \{R_{i,j}\}_{j\in C})\\
       b_1 \defeq \Hnon(\pk, (R_1',\ldots, R_\nu'))\\
       \pcfor i \in \interval{\nu}\pcdo\\
       \t R_i \defeq (R_i')^{b_1^{i-1}}\\
       \pcreturn (R_1, \ldots, R_\nu)
     }
   \end{pcvstack}
   \pchspace
   \begin{pcvstack}
     \procedure{$\signround(k, \sk_k, \pk, \sid, m, (R_i')_{i\in\interval{\nu}}, \overline{j}, \{\overline{\pk}_i\}_{i\in\interval{\overline{n}}\setminus \{\overline{j}\}}, (\overline{R}_i)_{i\in\interval{\nu}})$}{
       \rule{0pt}{\baselineskip}
       (x_k, X_k) \defeq \vpss{1}.\gen(k, \sk_k, w_0)\\
       \pcfor i\in\interval{\nu}\pcdo: \;
	 (r_{i,k}, R_{i,k}) \defeq \vpss{1}.\gen(k, \sk_k, i\concat \sid)\\
       \overline{\pk}_{\overline{j}} \defeq \pk\pcsc L\defeq (\overline{\pk}_j)_{j\in\interval{\overline{n}}}\\
       \tX\defeq \keyagg(L) \pcsc a \defeq \keyaggcoef(L, \pk)\\
       b_1 \defeq \Hnon(\pk, (R_1',\ldots, R_\nu'))\pcsc b_0 \defeq \Hnonover(\tX, (\overline{R}_1,\ldots, \overline{R}_\nu), m)\\
       \check{b} \defeq b_1\cdot b_0\pcsc R \defeq \sprod_{i\in\interval{\nu}}\overline{R}_i^{b_0^{i-1}}\pcsc c \defeq \Hsig(\tX, R, m)\\
       s_k \defeq \ssum_{i\in\interval{\nu}}r_{i,k}\,\check{b}^{i-1} + c\cdot a\cdot x_k\\
       \pcreturn s_k
     }
     \pcvspace
     \procedure{$\signagg(R, C, \{s_k\}_{k\in C})$}{
       \rule{0pt}{\baselineskip}
       \pclinecomment{$R$ can be computed from protocol messages as in $\signround$}\\
       \pcfor k\in C\pcdo: \; \lambda_k \defeq \sprod_{j\in C\setminus\{k\}}\frac{j}{j - k}\\
       s \defeq \ssum_{k\in C} s_k\cdot \lambda_k\\
       \pcreturn (R, s)
     }
   \end{pcvstack}
  \end{pchstack}
 \end{pcvstack}
 \end{varwidth}}
 \caption{The protocol $\scheme{1}[\grgen, \nu, t]$. In $\signround$, the values $(R_i')_{i\in\interval{\nu}}$ are the aggregate pre-nonces computed in $\preagg$, and $(\overline{R}_i)_{i\in\interval{\nu}}$ are the aggregate nonces of the outer session.}
 \label{fig:iceberg1}
\end{wfigure}

\paragraph{Setup and key generation}
To set up $\scheme{1}$ we select a group description $\gparam$, and the hash functions $\Hprf, \Hagg, \Hnon, \Hnonover, \Hsig:\str\to\Zpp$ are those of the outer session.
Session identifiers are bitstrings of a fixed length, $\sidspace = \bool^{\ell}$, and the key tag $w_0$ is chosen outside $\sidspace$.
Key generation wraps the key generation of $\vpss{1}$, where the members obtain replicated shares $\sk_k$ of a secret $\sk = \ssum_{\aset\in\asets}\phi_{\aset}$ with quorum parameter $\mu = 2t-1$, from a trusted dealer or a distributed key generation.
Each member derives its key share $(x_k, X_k) \defeq \vpss{1}.\gen(k, \sk_k, w_0)$ at a fixed public tag $w_0$ reserved for the key, the share commitments of a quorum $C$ are checked with $\vpssver{1}$, and $\agg$ interpolates the group key $X = g^{\sk}$.
This key occupies the group's slot of the outer key list, so to the outer session the group looks like one ordinary signer.

\paragraph{Nonce precomputation}
The first round runs before the message exists and fixes the group's nonce for the session identifier $\sid$.
For each nonce slot $i \in \interval{\nu}$, member $k$ derives $(r_{i,k}, R_{i,k}) \defeq \vpss{1}.\gen(k, \sk_k, i\concat\sid)$ and broadcasts the group elements.
$\preagg$ checks the shares of a quorum $C$ with $\vpssver{1}$, interpolates each slot to the pre-nonce $R_i'$, derives the binding factor $b_1 \defeq \Hnon(\pk, (R_1',\ldots,R_\nu'))$, and outputs the bound nonces $R_i \defeq (R_i')^{b_1^{i-1}}$.
This mirrors $\signagg$ followed by $\signaggext$ in $\nestedmusig$, so the output is exactly the first-round message a single MuSig2 participant would send.
Since $\gen$ is deterministic and every summand is replicated, every qualifying quorum computes the same nonces, and a member absent in this round recomputes its shares later from $\sk_k$ alone.

\paragraph{Signing}
The second round runs once the message $m$ and the outer context $\ctx$ are known, that is, the group's position $\overline{j}$ in the outer key list, the co-signer keys $\{\overline{\pk}_i\}$, and the outer aggregate nonces $(\overline{R}_i)$.
Member $k$ recomputes its key share and nonce shares via $\gen$, forms the outer key list $L$ with $\pk$ at position $\overline{j}$, and computes the aggregate key $\tX \defeq \keyagg(L)$ together with the group's aggregation coefficient $a \defeq \keyaggcoef(L, \pk)$.
Two binding factors enter the response, the inner $b_1 \defeq \Hnon(\pk, (R_1',\ldots,R_\nu'))$ and the outer $b_0 \defeq \Hnonover(\tX, (\overline{R}_1,\ldots,\overline{R}_\nu), m)$.
With $\check{b} = b_1\cdot b_0$, the session nonce $R = \sprod_i \overline{R}_i^{b_0^{i-1}}$, and the challenge $c = \Hsig(\tX, R, m)$, member $k$ outputs the partial signature
\[
  s_k \defeq \ssum_{i\in\interval{\nu}} r_{i,k}\,\check{b}^{i-1} + c\cdot a\cdot x_k.
\]

\paragraph{Aggregation and verification}
$\signagg$ combines the partial signatures of a quorum $C$ by Lagrange interpolation, $s \defeq \ssum_{k\in C} s_k\cdot\lambda_k$, and outputs $(R, s)$. The value $s$ is the group's second-round message in the outer session.
The outer session adds the partial signatures of the co-signers, and the result verifies as an ordinary Schnorr signature under $\tX$.
For Lightning we instantiate $\nu = 2$ and nesting depth $\Lambda = 2$, so the outer session is unmodified BIP~327 MuSig2 and the counterparty runs the protocol it already runs today.

\section{Security}
\label{sec:security}

We prove the security of $\scheme{1}$ by reducing it to the EUF-CMA security of $\nestedmusig$, in two steps.
First, in \cref{lem:sharconv}, we reduce $\scheme{1}$ to the scheme $\scheme{0}$~(\cref{fig:iceberg0}), which replaces $\vpss{1}$ by the variant $\vpss{0}$ without share conversion~(\cref{fig:vpss0}), where every member outputs one contribution per summand, and aggregation deduplicates the replicated copies instead of interpolating.
Both schemes compute the same keys, nonces, and signatures, so the reduction is a wrapper that converts shares back and forth.
Then, in \cref{thm:main}, we reduce $\scheme{0}$ to $\nestedmusig$, and we describe this reduction at a high level before stating the theorem.

\begin{lemma}\label{lem:sharconv}
If $\scheme{0}$ is $\tsmseufcma$ secure then $\scheme{1}$ is $\tsmseufcma$ secure. In particular, for every adversary $\adv$ against $\pcgame{\tsmseufcma}{\scheme{1}}$ there exists an adversary $\bdv$ such that $$\advantage{\tsmseufcma}{\bdv,\scheme{0}}\geq \advantage{\tsmseufcma}{\adv,\scheme{1}}.$$
\end{lemma}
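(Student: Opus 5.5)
We will build $\bdv$ as a wrapper around $\adv$ that runs $\adv$ unchanged against the $\scheme{1}$ interface while itself playing $\pcgame{\tsmseufcma}{\scheme{0}}$. It forwards $(n,t,\corrupt)$ unchanged, and at the end outputs $\adv$'s forgery $(L^*,m^*,\sigma^*)$ as is. The argument rests on one algebraic fact about $\vpss{1}$ versus $\vpss{0}$. For any tag $w$, $\vpss{0}.\gen$ gives member $k$ the per-summand values $\{(\aset,\Hprf(\phi_{\aset},w), g^{\Hprf(\phi_{\aset},w)})\}_{\aset\in\asetsof{k}}$. $\vpss{1}.\gen$ outputs a fixed public linear combination of exactly these values, namely the degree-$(t-1)$ Shamir share of $\ssum_{\aset}\Hprf(\phi_{\aset},w)$, whose coefficients depend only on $k$ and $\aset$ (the polynomial for $\aset$ vanishes on $\aset$ and is $1$ at $0$). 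So there is a deterministic public map $\mathsf{Conv}_k$ that sends a $\scheme{0}$ message of member $k$ (key share, nonce shares, partial signature) to the corresponding $\scheme{1}$ message. It applies in the exponent for group elements and in $\Zpp$ for scalars. Because $\signround$ is linear in $(r_{i,k},x_k)$ with coefficients $\check b^{i-1}$ and $c\,a$ computed from public data only, the same map converts partial signatures.

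\textbf{Simulating the oracles.} On $\keygeno(i)$, $\bdv$ queries its own $\keygeno(i)$. It relays the interaction with $\adv$, converting every honest outgoing share-commitment message by $\mathsf{Conv}$ and passing $\adv$'s corrupt contributions through unchanged, since corrupt members' summands are known to $\adv$. It returns the same $\pk$, because $\vpss{1}.\agg$ and $\vpss{0}.\agg$ both yield $g^{\sk}$ whenever verification passes. On $\preroundo(\sid,i)$ it queries its own oracle and returns $\mathsf{Conv}_i$ of the answer. On $\signo(\sid,i,m,\msg,\ctx)$ it does the same. Here the $\msg$ argument, which contains the aggregated pre-nonces $R_i'$ and bound nonces, is identical in both schemes because both aggregate to $g^{\ssum_{\aset}\Hprf(\phi_{\aset},i\concat\sid)}$. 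The oracle bookkeeping ($SID$, the consensus assertion on $M[\sid]$, and the sets $Q[m,\sid]$) is therefore updated identically in both games: a query by $\adv$ is legal in the $\scheme{1}$ game iff $\bdv$'s corresponding query is legal in the $\scheme{0}$ game. Since all keys, nonces, binding factors $b_1,b_0$, challenges and the Lagrange-aggregated $s$ coincide, $\adv$'s view is exactly its view in $\pcgame{\tsmseufcma}{\scheme{1}}$.

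\textbf{Forgery transfer.} The outer $\keyagg$ and $\ver$ are the same in both games, and so is the set $PK$. $Q$ is the same as well, so the freshness condition $\verts{Q[m^*,\sid]}<t-\verts{\corrupt}$ holds for $\bdv$ exactly when it holds for $\adv$. This gives $\advantage{\tsmseufcma}{\bdv,\scheme{0}}\ge\advantage{\tsmseufcma}{\adv,\scheme{1}}$. The inequality may be strict, since $\bdv$ may win in cases where $\adv$ would have been aborted by a $\vpssver{1}$ failure.

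\textbf{Main obstacle: interactive key generation.} The step I expect to be delicate is key generation and the $\preagg$ verification path when $\adv$ sends malformed messages. Converting honest messages forward is easy. The difficulty is that $\adv$'s corrupt $\scheme{1}$ messages are Shamir shares, while $\bdv$'s game expects per-summand $\scheme{0}$ messages, and the inverse of $\mathsf{Conv}$ is not unique. I would handle this in two steps. First, with $\mu=2t-1$, a quorum contains at least $t$ honest shares, which determine the polynomial. So $\vpssver{1}$ either rejects, and $\bdv$ aborts exactly as the $\scheme{1}$ execution would, or accepts with an aggregate equal to the honest-determined value. Second, $\bdv$ therefore never needs to invert corrupt shares. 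It only needs to substitute, for each corrupt party, the $\scheme{0}$ contributions consistent with the honest copies of the replicated summands; every summand has an honest holder because $\verts{\corrupt}<t$. Making this case analysis precise, so that accept/reject decisions match in both games, is where most care is needed.
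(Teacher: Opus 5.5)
Your proposal is correct and is essentially the paper's proof. In both, $\bdv$ forwards each oracle query to the $\scheme{0}$ game, converts the per-summand answers with the fixed public coefficients $L'_{\aset}(k)$ (in the exponent for key and nonce shares, and in $\Zpp$ for partial signatures by linearity of $\signround$), checks $\vpssver{1}$ itself during key generation, and outputs $\adv$'s forgery unchanged, so $PK$, $M$ and $Q$ evolve identically in both games.

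The only difference is how the corrupt members' $\vpss{0}$ vectors are obtained in key generation. The paper recomputes $\vpss{0}.\gen(j,\sk_j,w_0)$ from PRF keys extracted during key generation. You instead fill them in from the honest copies of each replicated summand, which exist because every $\aset$ has an honest holder in $C\setminus\aset$ when $\verts{C}\geq 2t-1$; this avoids the extraction step. Your worry about a $\preagg$ verification path is unnecessary, since in the game $\signo$ takes $\msg$ directly from the adversary and $\bdv$ simply forwards it.
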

\begin{proof}
The algorithm $\bdv$ is an elementary wrapper around $\adv$ that simulates the three TS-MS-EUF-CMA oracles by forwarding all oracle queries and performing share conversion, from replicated secret sharing to Shamir secret sharing, as follows.

Since $\vpss{0}$ and $\vpss{1}$ have identical key generation algorithms, $\bdv$ forwards all $\keygeno$ queries from $\adv$ to the $\keygeno$ oracle provided by its environment until that phase of key generation is complete, making sure to store all extracted secret PRF keys. For each queried index $k$, this results in the environment returning $(X_{k,\aset})_{\aset\in\asetsof{k}}$, from which $\bdv$ computes and returns to $\adv$ the value $X_k\defeq \sprod_{\aset\in\asetsof{k}} X_{k,\aset}^{L_{\aset}'(k)}$, where $L_{\aset}'(k) = \sprod_{\ell\in \aset}\frac{\ell-k}{\ell}$ as in $\vpss{1}.\gen$. Upon subsequently receiving $(C, \{X_j\}_{j\in C})$ from $\adv$, $\bdv$ asserts that $\vpss{1}.\ver(t, 2t-1, C, \{X_j\}_{j\in C}) = 1$ and then obtains, for each $j\in C$, the share vector $(X_{j,\aset})_{\aset\in\asetsof{j}}$, either as observed in the interaction with the environment for honest $j$, or by computing $\vpss{0}.\gen(j, \sk_j, w_0)$ from the shares $\sk_j$ extracted during key generation for corrupt $j$. Then $\bdv$ gives $\{(X_{j, \aset})_{\aset\in\asetsof{j}}\}_{j\in C}$ to the environment and receives $X$, which is returned to $\adv$.

Upon receiving a $\preroundo(\sid, k)$ query from $\adv$, $\bdv$ forwards it to the $\scheme{0}$ $\preroundo$ oracle, receiving $(R_{1,k},\ldots, R_{\nu,k})$, where $R_{i,k} = (R_{i,k,\aset})_{\aset\in\asetsof{k}}$. Then, $\bdv$ computes $$R'_{i,k} \defeq \prod_{\aset\in\asetsof{k}} R_{i,k,\aset}^{L_{\aset}'(k)}$$ and returns $(R'_{1,k},\ldots,R'_{\nu,k})$ to $\adv$.

Upon receiving $\signo$ queries from $\adv$, $\bdv$ forwards these queries to the $\scheme{0}$ $\signo$ oracle, receiving $(s_{k,\aset})_{\aset\in\asetsof{k}}$, and then computes and returns to $\adv$ the value $$s_k \defeq \sum_{\aset\in\asetsof{k}}s_{k,\aset}\cdot L'_{\aset}(k).$$

Finally, $\bdv$ outputs the forgery output by $\adv$, which is valid in the $\scheme{0}$ game since both schemes share the same key aggregation and signature verification. The above perfectly simulates the $\scheme{1}$ environment to $\adv$, noting that both $\vpss{0}$ and $\vpss{1}$ are information-theoretically verifiable and unique.
\end{proof}

\paragraph{Reducing to $\nestedmusig$}
The reduction behind \cref{thm:main} is a straight-line simulation. It never rewinds the adversary, and it introduces no assumption beyond the security of $\nestedmusig$, which has multiple security proofs under different parameter constraints and cryptographic assumptions.
Its starting point is the observation that a group running $\scheme{0}$ already behaves like a depth-two $\nestedmusig$ session.
The group key is the product of the summand keys $g^{\Hprf(\phi_{\aset}, w_0)}$, one per subset $\aset$, so we may view the summands themselves as the innermost signers of a nesting, where the outer level is the MuSig2 session with the counterparty, and the inner level aggregates the summand keys into the group key.
The reduction makes this view literal by programming the key aggregation oracle $\Hagg$ so that the aggregate of the summand keys equals the group key, and it hides the EUF-CMA challenge key in exactly one summand, one whose PRF key the corrupted members do not hold.

Every value the adversary sees is then computed honestly, with one exception.
For all summands except the challenge one, the reduction knows the PRF keys and runs the real protocol.
For the challenge summand it cannot evaluate $\Hprf$, so it substitutes the honest signer of the EUF-CMA game. The nonce shares for a session identifier come from the $\preroundo$ oracle, and the corresponding component of a partial signature comes from the $\signo$ oracle.
This substitution is consistent because, in $\scheme{0}$, the component a member reports for a summand depends only on the summand and the session, not on the member, so one oracle response answers every honest member's query for the same $\sid$, and caching it preserves the determinism the adversary expects. Furthermore, the consensus assertion in the game guarantees that each $\sid$ is signed on at most one message, so the single $\sign'$ call that the EUF-CMA game permits per session is all the reduction ever needs.

A forgery translates because if for every $\sid$, fewer than $t - \verts{\corrupt}$ honest members signed $m^*$, then, for a suitable choice of the challenge summand, no component of the challenge summand for $m^*$ was ever released. 
The reduction then never queried $\signo$ on $m^*$, and the forged signature is a valid $\nestedmusig$ forgery on a fresh message, and \cref{lem:sharconv} lifts the result from $\scheme{0}$ to $\scheme{1}$.

\begin{theorem}\label{thm:main}
If the nested multi-signature scheme $\nestedmusig[\grgen, \nu]$ is EUF-CMA secure, then the nested threshold multi-signature scheme $\scheme{1}[\grgen, \nu, t]$ is TS-MS-EUF-CMA secure in the random oracle model for $\Hagg$, $\Hprf$, $\Hnon$, $\Hnonover$, $\Hsig:\str\to\Zpp$.

Precisely, let $N = \binom{n}{t-1}$ be the number of summands of the replicated sharing, which is also the size of the key list the reduction hands to the $\nestedmusig$ session.
For any adversary $\adv$ against $\scheme{1}$ running in time at most~$\tau$ and making at most $q_{agg}$ queries to $\Hagg$, there exists an adversary $\cdv$ against $\nestedmusig$ running in time at most
\[
  \tau' = \tau + O(N)\cdot \tau_{\rm exp},
\]
where $\tau_{\rm exp}$ is the time of an exponentiation in $\GG$, such that
$$
\advantage{\eufcma}{\cdv,\nestedmusig}
\geq
\advantage{\tsmseufcma}{\adv,\scheme{1}} - \frac{N\cdot q_{agg}}{2^\secpar}.
$$
\end{theorem}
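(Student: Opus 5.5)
By \cref{lem:sharconv} it is enough to build, from an adversary $\bdv$ against $\pcgame{\tsmseufcma}{\scheme{0}}$, an adversary $\cdv$ against the EUF-CMA game of $\nestedmusig$ at depth two. The outer level of that session is the MuSig2 session with the counterparty. The inner level aggregates the $N=\binom{n}{t-1}$ summand keys $X_{\aset}=g^{\Hprf(\phi_{\aset},w_0)}$ into the group key $X$.

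\textbf{Setup and embedding.} $\cdv$ receives the challenge key $X^\dagger$ of the honest $\nestedmusig$ signer. Since $|\corrupt|\le t-1$, there is a subset $\aset^*\in\asets$ with $\corrupt\subseteq\aset^*$. No corrupted member therefore holds $\phi_{\aset^*}$. $\cdv$ fixes this $\aset^*$ in advance; the choice is made concrete in the forgery step below.

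$\cdv$ samples all other summands $\phi_{\aset}$ itself and sets $X_{\aset^*}\defeq X^\dagger$. It simulates key generation by handing corrupted members exactly their shares, none of which involve $\aset^*$.

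It then programs $\Hagg$ so that the inner key aggregation of the list $(X_{\aset})_{\aset\in\asets}$ gives the group key with the coefficients $\scheme{0}$ uses, namely unit coefficients, $\prod_{\aset}X_{\aset}$. One way is to choose the inner aggregation coefficients at random and then set the non-challenge summand keys to $X_{\aset}=g^{x_{\aset}}$, with $x_{\aset}$ divided by the programmed coefficient. Programming fails only if $\adv$ has already queried $\Hagg$ on the relevant list-and-key input. Over $q_{agg}$ queries and $N$ programmed points this costs at most $N q_{agg}/2^\secpar$, which is the loss term in the theorem.

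\textbf{Oracle simulation.} For every summand other than $\aset^*$, $\cdv$ knows the PRF key. It answers these components honestly, modelling $\Hprf$ as a random oracle it controls, so nonce and response components are computed exactly as in $\scheme{0}$.

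For $\aset^*$ it proceeds as follows.
\begin{itemize}
\item On the first $\preroundo(\sid,k)$ query touching $\sid$, $\cdv$ calls the $\nestedmusig$ first-round oracle once. It caches the resulting nonces and reuses them for every honest member $k\notin\aset^*$. This is consistent because in $\scheme{0}$ the $\aset^*$-component depends only on $(\aset^*,\sid)$, not on the member.
\item On $\signo(\sid,k,m,\msg,\ctx)$, $\cdv$ assembles the inner and outer aggregate nonces and the outer key list from $\ctx$. It calls the $\nestedmusig$ signing oracle once per $\sid$, caches the resulting response for $\aset^*$, and adds honestly computed components for the other summands.
\end{itemize}
The consensus assertion $M[\sid]\in\{\bot,m\}$ guarantees a single message per $\sid$, so the one-signature-per-session restriction of the EUF-CMA oracle is never violated. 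The binding factors $b_1$ and $b_0$ and the product $\check b$ match how $\nestedmusig$ combines its inner $\signagg$ and $\signaggext$, because $\preagg$ was defined to mirror exactly these steps. The main check here is that the hash inputs coincide.

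\textbf{Forgery translation and the main obstacle.} When $\adv$ outputs $(L^*,m^*,\sigma^*)$, $\cdv$ outputs the same tuple, expanding the group key in $L^*$ into its programmed nested key tree. Verification is unchanged because key aggregation and $\ver$ are the outer ones.

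The delicate point is freshness: $\cdv$ must never have queried its signing oracle on $m^*$. That oracle is queried only when some honest member outside $\aset^*$ signs. The winning condition gives $|Q[m^*,\sid]|<t-|\corrupt|$ for every $\sid$. I therefore choose $\aset^*\supseteq\corrupt$ and pad it with $t-1-|\corrupt|$ honest members, trying to arrange that every honest signer of $m^*$ lies inside $\aset^*$. This is where I expect the difficulty. The padding can hold at most $t-1-|\corrupt|$ honest members, which equals the maximum number of honest signers the winning condition allows for a single $\sid$. However, $\aset^*$ must be fixed before the queries are seen, and signers for different $\sid$ could differ.

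My first attempt is a guessing argument over the small family $\asets$ (\cref{cstr:small}), which costs a factor $N$. To avoid that factor, which the stated bound does not show, I would next argue from the verification equation. Releasing $\aset^*$'s component for a session that did not complete a full quorum of $2t-1$ signatures means that session's nonce never gets used with its honest responses. Sessions that are incomplete in this sense would then be modelled as not queried at the $\nestedmusig$ level, for example by deferring the $\sign$ call until a quorum is reached. I expect the final proof to use this deferral trick. Finally, the running time is $\tau$ plus $O(N)$ exponentiations per query, which gives the stated $\tau'$.
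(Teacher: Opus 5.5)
Your setup matches the paper's proof step for step: you embed $X^\dagger$ in one summand $\aset^*\supseteq\corrupt$, program $\Hagg$ so the inner aggregate uses unit coefficients, cache one first-round call and one signing call per $\sid$, and use $M[\sid]\in\{\bot,m\}$ to respect the one-$\sign'$-per-session rule. The gap is in the freshness step, and the deferral trick you expect to use there cannot work.

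$\cdv$ must answer each $\signo(\sid,k,m,\msg,\ctx)$ immediately. For every honest $k\notin\aset^*$, the returned share depends on the challenge summand's response $s_{\aset^*}=\ssum_i r_{i,\aset^*}\check b^{i-1}+c\,a\,x_{\aset^*}$. In $\scheme{0}$ it appears as a coordinate, and in $\scheme{1}$ it appears with the nonzero weight $L'_{\aset^*}(k)$. Computing it needs $x^\dagger$ and the discrete logarithms of the nonces that came from the $\nestedmusig$ first-round oracle, so the signing oracle must be called at that moment. A single honest signer outside $\aset^*$ already forces the call, whether or not any quorum of $2t-1$ is ever reached. Once the call has been made on $m^*$ under the forgery's key tree, $\cdv$'s output fails the syntactic check $((L_d)_d,m^*)\notin Q$ of the EUF-CMA game. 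No argument that the session was ``incomplete'' changes that.

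What the step actually needs is that every honest member who answered $\signo$ on $m^*$, in any session, lies in $\aset^*\setminus\corrupt$. Guessing $\aset^*$ among the $\binom{n-\verts{\corrupt}}{t-1-\verts{\corrupt}}$ subsets containing $\corrupt$ handles the fact that $\aset^*$ is fixed in advance, at that multiplicative loss. It does not handle the multi-session issue you also noticed, because the game bounds $\verts{Q[m^*,\sid]}$ only per $\sid$.

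Take $t=3$ and $\corrupt=\{n\}$. The adversary can let member $1$ sign $m^*$ under $\sid_1$ and member $2$ sign it under $\sid_2$. Each $\sid$ then has one honest signer, which is below $t-\verts{\corrupt}=2$. Yet together they release every unknown summand $\{h,n\}$ on $m^*$, so no choice of $\aset^*$, guessed or not, yields a fresh $\nestedmusig$ forgery.

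Closing the step therefore needs a cross-session freshness condition such as $\verts{\bigcup_{\sid}Q[m^*,\sid]}<t-\verts{\corrupt}$ together with the guessing loss, or a genuinely different argument that handles several honest summands. The paper's proof does not supply this either. It fixes $\aset_1$ ``WLOG'' as any summand unknown to the adversary after key generation, and it asserts that ``for a suitable choice of the challenge summand'' no component for $m^*$ was released. So the difficulty you identified is real, and the loss-free bound is not justified at this point by either argument. Deferring the oracle call is not a valid repair.
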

\begin{proof}
We proceed by constructing an elementary wrapper, $\cdv$, around $\bdv$ from \cref{lem:sharconv} that simulates the TS-MS-EUF-CMA environment given access to the EUF-CMA environment for $\nestedmusig$. The result then follows since, as we argue below, a forgery in the simulated game translates into an EUF-CMA forgery against $\nestedmusig$.

First, $\cdv$ performs $\vpss{0}.\keygen$ simulating all honest parties, allowing it to extract all adversarial PRF secrets by the honest majority assumption, which assures that the honest parties make up a valid quorum. Index the subsets in $\asets$ as $\aset_1,\ldots,\aset_\gamma$ with $\gamma = \verts{\asets}$ and, WLOG, let $\phi_{\aset_1}$ be a PRF key not known to the adversary at the end of secret sharing.

Next, $\cdv$ embeds the EUF-CMA challenge key $X^*$ by setting $X_1 \defeq X^*$, so that the corresponding summand key becomes $(X^*)^{a_1}$ for the programmed coefficient $a_1$ below. Specifically, for each $j>1$, $\cdv$ computes $x_j' \defeq \Hprf(\phi_{\aset_j}, w_0)$, picks $a_j$ randomly, and then sets $x_j \defeq x_j'/a_j$ and $X_j \defeq g^{x_j}$. Finally, $\cdv$ programs $\Hagg(L, X_j)$ to be equal to $a_j$ for all $j$, where $L \defeq \{X_1,\ldots,X_\gamma\}$. This ensures that $\keyagg(L) = \sprod_j X_j^{a_j} = (X^*)^{a_1}\sprod_{j>1} g^{x_j'}$. The programming fails with only negligible probability since the value $X_1$ was just drawn uniformly at random and is an element of $L$. Concretely, if $q_{agg}$ is a bound on the number of queries the adversary makes to $\Hagg$ then the programming fails with probability at most $\frac{N\cdot q_{agg}}{2^\secpar}$. Lastly, $\cdv$ continues to execute key generation honestly, except that $(X^*)^{a_1}$ is used in place of $g^{\Hprf(\phi_{\aset_1}, w_0)}$.

Upon receiving $\preroundo$ queries, $\cdv$ honestly makes all expected $\sid$-related assertions and computes $\preround$ for all but the unknown summand $\aset_1$ in which the challenge key is embedded. For that summand, $\cdv$ calls the EUF-CMA $\preroundo$ oracle. All future calls with the same session identifier use a cached response for the challenge summand to maintain consistency. Since we are assuming state-agreement is being used so that it is impossible to validly query the same sid in two different inputs to the second round, this simulation is faithful.

Lastly, upon receiving $\signo$ queries, $\cdv$ honestly makes all verification assertions and then computes $\signround$ honestly for all but the unknown summand $\aset_1$. For that summand, $\cdv$ invokes the EUF-CMA $\signo$ oracle.
Because of the programming of $\Hagg$ during key generation, the response from the nested MuSig2 signing oracle yields a valid $\scheme{0}$ partial signature.

We conclude that $\cdv$ perfectly simulates the expected environment, so long as no bad event happens in $\Hagg$, since everything is computed honestly except for the values relating to the embedded challenge, all of which are drawn from a distribution identical to an honest execution.
\end{proof}

\begin{corollary}
Let $\grgen$ be a group generation algorithm for which the algebraic one-more discrete logarithm problem is hard.
By \cref{thm:main} and the security of $\nestedmusig[\grgen, \nu=2]$ for sessions of depth two~\cite{Koh26}, the scheme $\scheme{1}[\grgen, \nu=2, t]$ nested in an unmodified BIP~327 session is TS-MS-EUF-CMA secure in the algebraic group model for $\grgen$ and the random oracle model for $\Hagg$, $\Hprf$, $\Hnon$, $\Hnonover$, $\Hsig$.
\end{corollary}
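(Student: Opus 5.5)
The statement is a corollary, so the plan is to chain two results and check that their hypotheses and models line up. \Cref{thm:main} already gives a concrete reduction. For any adversary $\adv$ against $\scheme{1}[\grgen,\nu,t]$ there is a $\cdv$ against $\nestedmusig[\grgen,\nu]$ with
\[
\advantage{\eufcma}{\cdv,\nestedmusig}\geq\advantage{\tsmseufcma}{\adv,\scheme{1}}-\frac{N\cdot q_{agg}}{2^\secpar},
\]
and $\cdv$ runs in time $\tau+O(N)\tau_{\rm exp}$. I will instantiate this with $\nu=2$. I then invoke the result of~\cite{Koh26}: for $\nu=2$ and nesting depth two, $\nestedmusig$ is EUF-CMA secure in the AGM+ROM under the algebraic one-more discrete logarithm assumption (\aomdl). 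Combining the two, the advantage of $\adv$ is bounded by the advantage of an \aomdl solver built from $\cdv$, plus $N q_{agg}/2^\secpar$.

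\textbf{Step 1: the depth parameter matches.} The $\nestedmusig$ instance that $\cdv$ interacts with must be exactly a depth-two session. In the reduction of \cref{thm:main}, the outer level is the unmodified BIP~327 session with the counterparty. The inner level aggregates the $N=\binom{n}{t-1}$ summand keys into the group key, and the challenge key $X^*$ sits at the inner level. This is the depth-two configuration of~\cite{Koh26}, and nothing deeper arises.

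\textbf{Step 2: the models compose.} $\cdv$ only forwards oracle queries, programs $\Hagg$, and evaluates $\Hprf$ on known keys. Every group element it outputs is obtained by exponentiating elements it received, including $X^*$ and the oracle nonces, or by computing $g^{x}$ for scalars it knows. So $\cdv$ can output algebraic representations of all its group elements whenever $\adv$ is algebraic, and $\cdv$ is itself algebraic. The random oracles $\Hnon,\Hnonover,\Hsig$ are passed through unchanged. $\Hagg$ is programmed only at the points fixed in the proof of \cref{thm:main}, which is compatible with the ROM proof of~\cite{Koh26} because $\cdv$ can relay all other queries to its own oracle.

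\textbf{Step 3: the bound is negligible.} $N$ is polynomial (indeed small by \cref{cstr:small}) and $q_{agg}$ is polynomial, so $N q_{agg}/2^\secpar$ is negligible. The running-time overhead is polynomial. Hence TS-MS-EUF-CMA security of $\scheme{1}[\grgen,2,t]$ follows.

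The step I expect to be the main obstacle is Step 2, specifically the interaction between $\cdv$'s programming of $\Hagg$ and any programming that the AGM proof of~\cite{Koh26} itself performs on $\Hagg$. I would resolve this by noting that $\cdv$ programs $\Hagg$ only at points $(L,X_j)$ with the fresh list $L$, which the $\nestedmusig$ reduction never programs except with negligible probability. The remaining points are answered by $\cdv$'s own oracle, so the reduction of~\cite{Koh26} can be applied to $\cdv$ as a black box.
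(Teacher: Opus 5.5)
Your main chain is exactly the paper's one-line argument: instantiate \cref{thm:main} at $\nu=2$, check that every signing session and the forgery are depth-two $\nestedmusig$ instances with $X^*$ in the inner list, and invoke the AGM result of~\cite{Koh26}. Your check that $\cdv$ is algebraic whenever $\adv$ is, which the AGM theorem needs before it applies to $\cdv$, is a point the paper leaves implicit.

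The step that does not hold up is your resolution of the $\Hagg$ issue. In the EUF-CMA game of \cref{fig:game:eufcma}, $\Hagg$ is the game's own random oracle, so $\cdv$ has query access only and cannot program it at any point. Suppose $\cdv$ answers $\adv$'s query on $(L,X^*)$ with a self-chosen $a_1$. The game's signing oracle still uses the genuine $\musigcoef(L,X^*)$ and the genuine $\keyagg(L)$, which then differs from the group key $X$ in $\adv$'s view. Consequently:
\begin{itemize}
\item $a_{1,0}$, $b_1=\Hnon(\pk_{1,0},\msg^1)$, the outer key $\tX$, $b_0$ and $c$ all differ from what $\adv$ expects;
\item the returned $s_1$ is not the $\aset_1$-component;
\item $\adv$'s forgery fails the game's check $\keyagg(L_1)\in L_0$.
\end{itemize}
Disjointness from the points that the reduction of~\cite{Koh26} programs is therefore the wrong criterion. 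The point $(L,X^*)$ is one the game itself evaluates, and a $\cdv$ that programs it is not an adversary to which~\cite{Koh26} applies as a black box.

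There are two sound ways out. First, you can take \cref{thm:main} as stated, since it already supplies $\cdv$ as a legitimate EUF-CMA adversary. Then drop the programming discussion, so Step~2 reduces to the algebraicity check.

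Second, if you want to justify that such a $\cdv$ exists, avoid programming $\Hagg$ altogether:
\begin{itemize}
\item pick $X_j\defeq g^{y_j}$ for $j>1$;
\item query the real $\Hagg$ on $L=(X^*,X_2,\ldots,X_N)$ to obtain the $a_j$;
\item only then program $\Hprf$ so that $\Hprf(\phi_{\aset_j},w_0)=a_j y_j$ for $j>1$.
\end{itemize}
This is legitimate because $\Hprf$ lives solely in $\cdv$'s simulation, not in the $\nestedmusig$ game, and $\cdv$ samples the $\phi_{\aset}$ itself before $\adv$ sees them. Then $\keyagg(L)$ under the real $\Hagg$ equals the group key. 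The oracle's answer $c\,a\,a_1x^*+\ssum_j r_{1,j}\check{b}^{j-1}$ is then exactly the $\aset_1$-component of a $\scheme{0}$ partial signature.
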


\section{Prototype and Evaluation} \label{sec:evaluation}

\definecolor{cbBlue}{HTML}{0072B2}
\definecolor{cbOrange}{HTML}{E69F00}
\definecolor{cbGreen}{HTML}{009E73}
\definecolor{cbVermillion}{HTML}{D55E00}
\pgfplotsset{
  icebergplot/.style={
    tick align=outside, tick pos=left,
    grid style={very thin, black!12},
    axis line style={black!55}, tick style={black!55},
    minor tick length=1pt, major tick length=2.5pt,
    every axis plot/.append style={semithick},
    label style={font=\footnotesize},
    tick label style={font=\scriptsize},
    legend cell align=left,
    legend style={font=\scriptsize, draw=none, fill=none},
    mark size=1.7pt,
  },
}

We support our work with a prototype and a performance evaluation.
Our prototype~\prototypecite~implements \Iceberg over \texttt{secp256k1} and ports it into \texttt{eclair}, a production Lightning implementation.
Beyond demonstrating that \Iceberg can thresholdize today's Lightning with only moderate overhead, we structure our evaluation around four questions.
\begin{enumerate}[itemsep=2pt]
\item \emph{Micro benchmarks.} How expensive is \Iceberg in computation and in wire bytes across $(\thres,\groupsize)$?
\item \emph{Drop-in replacement.} Does \Iceberg slot into a production Lightning node without changes to its channel logic?
\item \emph{Thresholdized Lightning throughput.} How many payments per second can an \Iceberg Lightning endpoint sustain on a single core?
\item \emph{System bottleneck.} Does the threshold signature limit a thresholdized channel, or does the channel logic dominate it?
\end{enumerate}

\paragraph{Experimental setup}
We implement \Iceberg over \texttt{secp256k1} with BIP~340 x-only keys~\cite{BIP340} and BIP~327 key aggregation~\cite{BIP327}.
The outer session is unmodified BIP~327 with $\nu = 2$, so the counterparty sees stock MuSig2 throughout.
All measurements run on a dedicated machine with an AMD EPYC 7543 CPU, ten cores made available to the benchmark and a $4$\,GB measurement heap.
We measure at two layers, namely the primitive in isolation and a whole payment driven through two \texttt{eclair} channel state machines.
Both run over the parameter family $\thres \in \{2,3,4,5\}$ with $\groupsize$ up to ten, resulting in twenty configurations.\footnote{
For every $\thres$, the signing quorum has at least $2\thres - 1$ members.}

\paragraph{Micro benchmarks}
We first measure the local cost of \Iceberg and compare it to single-key MuSig2.  
One ordinary MuSig2 participant spends $42.0$\,$\mu$s on its partial signature, with a standard deviation of $5.4$ across all twenty configurations, measured over $200$ iterations. 
A group instead spends $1\,042$ to $7\,358$\,$\mu$s on the same partial signature~(\cref{tab:micro}).
The reported numbers are totals across the group, since every member runs in one process here.
Spread over the members at $(3, 7)$, a member that signs in both rounds does $14.4 \times$ the work of one participant, and a member that only contributes to the first round $1.5 \times$.
Aggregation costs additional $392$\,$\mu$s per signature.


On the wire the group still acts as a single signer, sending the counterparty $66$ bytes in the first round and $32$ in the second, in accordance with BIP~327. 
The traffic inside the group costs $2\,208$ bytes per payment at $(3, 7)$, which is $2\thres-1$ nonce shares in each of six first rounds and $\thres$ signature shares in each of two second rounds.
Each share carries a one-byte member index on top of the $66$ or $32$ bytes it transports.
Beyond that traffic, a member stores $100$ bytes of long-lived share material at $(2, 4)$, $484$ at $(3, 7)$ and $2\,692$ at $(4, 10)$.

\begin{table}[tp]
\tabcaption{Total processor time the group spends on one partial signature, in $\mu$s, measured in C over $200$ iterations. Empty cells cannot exist, since the quorum of $2\thres-1$ has to fit inside the group.}{tab:micro}

\centering
\tabfont
\setlength{\tabcolsep}{3.4pt}
\begin{tabular}{l r r r r r r r r}
\toprule
 & \multicolumn{8}{c}{Group size $\groupsize$} \\
\cmidrule(lr){2-9}
$\thres$ & $3$ & $4$ & $5$ & $6$ & $7$ & $8$ & $9$ & $10$ \\
\midrule
$2$ & $1\,042$ & $1\,226$ & $1\,242$ & $1\,099$ & $1\,056$ & $1\,083$ & $1\,087$ & $1\,072$ \\
$3$ & & & $2\,138$ & $2\,156$ & $2\,214$ & $2\,287$ & $2\,270$ & $2\,441$ \\
$4$ & & & & & $3\,685$ & $3\,841$ & $4\,180$ & $4\,557$ \\
$5$ & & & & & & & $6\,240$ & $7\,358$ \\
\bottomrule
\end{tabular}
\tabcaptionend
\end{table}

\paragraph{Drop-in replacement}
Next, we demonstrate that \Iceberg is a drop-in replacement for MuSig2. 
\texttt{eclair} reaches MuSig2 through the \texttt{secp256k1-kmp} bindings, and every call it makes for a taproot channel goes through that interface. 
We implement \Iceberg behind that interface so \texttt{eclair} obtains a public key, a public nonce and a partial signature from the group wherever it previously obtained them from one signer. 
Therefore, we can use \Iceberg in \texttt{eclair} without changing the channel logic or the message flow. 

Next, we verified that our binding is valid with the checks included by \texttt{eclair}. 
A channel enters its operating state only after \texttt{eclair} rebuilds the commitment transaction and runs it through Bitcoin's script interpreter against the real funding output, on both sides.
Every payment we time therefore settled against a genuine aggregate Schnorr signature.
We also replay the whole path against complete signing sessions recorded from the reference implementation.
Given the same seeds, session identifier and message, our prototype reproduces the reference byte for byte at each member's public share, at the group public key, at the aggregate group nonce and at the group signature.
We run this replay at $(2,3)$, $(3,5)$, $(3,7)$, $(4,7)$ and $(5,9)$.
Beyond that we run ten sequential payments on one channel.
We also pin both orderings of the two public keys, since key aggregation sorts them and the group lands first about half the time.
Finally, we run the group as the party that opens the channel and as the party that accepts it.
None of this changes \texttt{eclair}'s own behaviour, since $473$ of its channel tests pass unchanged and the build that produced our timings passed all seventeen of its correctness checks.

We additionally verified that a \Iceberg group never revises the nonce it sent even with a changing signing quorum~(\cref{cstr:offline,cstr:nonce-fixed}).
We verify that \Iceberg is compliant with these constraints by completing a session in which the members that produce the second round did not contribute to the first.
The resulting signature verifies under the aggregate key at every configuration we replay.

\paragraph{Thresholdized Lightning throughput}
We then measure the sustained throughput an \texttt{eclair} channel achieves once \Iceberg thresholdizes one of its two endpoints~(\cref{tab:payment}). 
Making one endpoint a group costs $3\,844$\,$\mu$s per payment at $(2,4)$ and $8\,246$\,$\mu$s at $(3,7)$, which is $6.7\%$ and $14.3\%$ of a payment's CPU time.
The largest configuration we deployed, $(4,10)$, adds $16\,786$\,$\mu$s, yet all configurations retain substantial throughput headroom over current Lightning payment rates.
Across the deployable family the growth is therefore quadratic. 

The added work is fixed per payment, so it converts into throughput on one core~(\cref{tab:price}).
A bare channel sustains $17.4$ payments per second, a $(2,4)$ group $16.3$ and a $(3,7)$ group $15.2$.
Matching the bare rate therefore takes $6.7$ to $29.1$ percent more machines, and $9.9$ to $38.0$ percent if the members verify each other's partial signatures.

Those three configurations pin $\groupsize$ to $\thres$, since a deployable group has $3\thres-2$ members, so they cannot show which of the two drives the cost.
We therefore sweep the whole family and ask whether the cost follows the fault tolerance alone or also the number of key shares a member holds~(\cref{tab:sweep}).
A member holds $\binom{\groupsize-1}{\thres-1}$ key shares, which is two to nine along the $\thres = 2$ row and twenty to eighty-four along the $\thres = 4$ row.
The $\thres = 2$ row is flat, because its seven added shares move the cost by $211$\,$\mu$s against intervals of about $\pm 160$.
The $\thres = 4$ row rises by $4\,258$\,$\mu$s instead, because its sixty-four added shares cost roughly $67$\,$\mu$s each.
The fault tolerance therefore sets the bulk of the cost, and the shares add to it only once the threshold makes them numerous.
The entry marked \textit{n/r} is not resolved, because its interval spans zero.

\begin{table}[tp]
\tabcaption{Added cost of one payment at the three deployable configurations, against a pooled control of $57\,599$\,$\mu$s.}{tab:payment}
\centering
\tabfont
\begin{tabular}{l l r r r}
\toprule
Tolerates & Configuration & Quorum & Added ($\mu$s) & Relative \\
\midrule
$1$ faulty member  & $(2,4)$  & $3$ & $3\,844 \pm 157$ & $6.7\%$ \\
$2$ faulty members & $(3,7)$  & $5$ & $8\,246 \pm 202$ & $14.3\%$ \\
$3$ faulty members & $(4,10)$ & $7$ & $16\,786 \pm 247$ & $29.1\%$ \\
\bottomrule
\end{tabular}
\tabcaptionend
\end{table}

\begin{table}[tp]
\tabcaption{Payment throughput on one core, derived from the measured cost per payment. The last column adds the optional verification of a member's partial signature.}{tab:price}
\centering
\tabfont
\begin{tabular}{l r r r}
\toprule
 & Payments & Against & With member \\
Configuration & per second & bare MuSig2 & verification \\
\midrule
bare MuSig2 & $17.4$ & & \\
$(2,4)$  & $16.3$ & $-6.3\%$  & $15.8$ $(-9.0\%)$ \\
$(3,7)$  & $15.2$ & $-12.5\%$ & $14.5$ $(-16.4\%)$ \\
$(4,10)$ & $13.4$ & $-22.6\%$ & $12.6$ $(-27.5\%)$ \\
\bottomrule
\end{tabular}
\tabcaptionend
\end{table}

\begin{table}[tp]
\tabcaption{Added cost of one payment in $\mu$s, over $1\,500$ paired iterations per configuration.}{tab:sweep}
\centering
\tabfont
\setlength{\tabcolsep}{3.4pt}
\begin{tabular}{l r r r r r r r r}
\toprule
 & \multicolumn{8}{c}{Group size $\groupsize$} \\
\cmidrule(lr){2-9}
$\thres$ & $3$ & $4$ & $5$ & $6$ & $7$ & $8$ & $9$ & $10$ \\
\midrule
$2$ & $3\,884$ & $3\,844$ & $3\,618$ & $3\,960$ & $3\,954$ & \textit{n/r} & $3\,910$ & $4\,096$ \\
$3$ & & & $7\,482$ & $7\,954$ & $8\,246$ & $8\,271$ & $8\,746$ & $8\,836$ \\
$4$ & & & & & $12\,528$ & $13\,535$ & $15\,185$ & $16\,786$ \\
$5$ & & & & & & & $22\,573$ & $26\,124$ \\
\bottomrule
\end{tabular}
\tabcaptionend
\end{table}

\begin{figure}[!ht]
\centering
\fitbox{\columnwidth}{%
\begin{tikzpicture}
\begin{axis}[icebergplot,
  width=\figwidth, height=3cm, scale only axis,
  xbar stacked, bar width=6pt,
  xmin=0, xmax=52, xtick={0,10,20,30},
  symbolic y coords={commit,onion,db,sig,peer,htlc}, ytick=data, y dir=reverse,
  yticklabels={{Commitment updates},{Onion construction},{Database validation},
               {Signature, group side},{Signature, counterparty},{HTLC add and settle}},
  enlarge y limits=0.1,
  xmajorgrids=true, ymajorgrids=false,
]
\addplot[fill=black!28, draw=none] coordinates
  {(35.59,commit) (10.47,onion) (10.45,db) (1.52,sig) (1.52,peer) (0.14,htlc)};
\addplot[fill=cbBlue!45, draw=none] coordinates
  {(0,commit) (0,onion) (0,db) (3.84,sig) (0,peer) (0,htlc)};
\addplot[fill=cbBlue, draw=none] coordinates
  {(0,commit) (0,onion) (0,db) (4.40,sig) (0,peer) (0,htlc)};
\node[anchor=west, font=\scriptsize, xshift=3pt] at (axis cs:35.59,commit) {$35.59$};
\node[anchor=west, font=\scriptsize, xshift=3pt] at (axis cs:10.47,onion)  {$10.47$};
\node[anchor=west, font=\scriptsize, xshift=3pt] at (axis cs:10.45,db)     {$10.45$};
\node[anchor=west, font=\scriptsize, xshift=3pt] at (axis cs:9.76,sig)     {$1.52$ ($+3.84$, $+8.25$)};
\node[anchor=west, font=\scriptsize, xshift=3pt] at (axis cs:1.52,peer)    {$1.52$};
\node[anchor=west, font=\scriptsize, xshift=3pt] at (axis cs:0.14,htlc)    {$0.14$};
\end{axis}
\end{tikzpicture}}
\caption{Computation splits of a single payment, over $500$ iterations on a warm JVM. The two appended segments are what a $(2,4)$ and a $(3,7)$ group adds to the single signature it replaces.}
\label{fig:components}
\end{figure}

\paragraph{System bottleneck}
Our micro benchmarks show that at $(3,7)$ a group signs $53$ times more slowly than one participant.
At the same time, a payment's compute grows by only $14.3\%$.
Therefore, we ask what the system bottleneck in a thresholdized channel is. 
%
To do this, we decompose a payment into the work \texttt{eclair} performs and compare it to the group's contribution~(\cref{fig:components}).
A payment costs $59.7$\,ms, of which the two commitment updates are $59.6\%$, the sender's onion construction is $17.5\%$ and the test database validating its own writes is another $17.5\%$.
The funding signatures of both parties are $5.1\%$, and only half of that, on average $1\,517$\,$\mu$s, belongs to the party a group replaces.
The substitution therefore starts from a fortieth of a payment.
The commitment updates carry most of that, and inside them the MuSig2 signing takes microseconds.
The rest is therefore transaction construction, one signature per HTLC, per-commitment key derivation, state persistence and message passing between the two channel state machines.

We also time the group's cryptography and the single-key signer it replaces inside the same payment, and hold their difference against the slowdown the channel shows~(\cref{tab:twolayer}).
The two agree to within $350$\,$\mu$s at every deployable configuration, so a group costs no more than the cryptography it adds.
A naive combination of the micro benchmark and the payment decomposition would predict a much larger slowdown.
Scaling a fortieth of a payment by the $53\times$ micro-benchmark factor suggests a payment should more than double in time.
However, the measured growth is only $14.3\%$.
This is because, within the live payment flow, the group costs $6.5\times$ the signer it replaces rather than $53\times$.
We attribute this gap to the fixed overhead of the \texttt{secp256k1-kmp} bindings, which affects both signers.

\begin{table}[tp]
\tabcaption{Cryptography timed inside a live payment against the cost measured through the channel.}{tab:twolayer}

\centering
\tabfont
\begin{tabular}{l r r r r}
\toprule
 & Group & Displaced & Measured & Residual \\
Configuration & ($\mu$s) & signer ($\mu$s) & added ($\mu$s) & ($\mu$s) \\
\midrule
$(2, 4)$  & $5\,672$  & $1\,633$ & $3\,844$  & $-195$ \\
$(3, 7)$  & $9\,946$  & $1\,539$ & $8\,246$  & $-161$ \\
$(4, 10)$ & $18\,557$ & $1\,421$ & $16\,786$ & $-350$ \\
\bottomrule
\end{tabular}
\tabcaptionend
\end{table}

\paragraph{Scope and limitations}
We deliberately exclude the network between the members from our benchmarks.
Since a group is the devices of one user or the machines of one company, we assume co-location on a low-latency local network.
A payment still drives six first rounds and two second rounds inside the group, so a deployment adds one round trip for each of them.
Furthermore, our prototype covers the active payment path alone, and mutual close, force close, splicing and channel announcements are not yet wired to a threshold signer.
Channel announcements in particular need an ECDSA signature over the funding key, which \Iceberg does not produce.
Finally, our benchmarks evaluate payments carrying a single HTLC.
Because every key on the group's side is a group key and each HTLC needs its own signature, the cost grows with the number of HTLCs in flight.

\section{Conclusion and Future Work} \label{sec:conclusion}
\Iceberg allows a group of signers to jointly operate a single participant in a MuSig2 session such that the other participant sees the message flow of a single signer throughout.
Either endpoint of a Lightning channel, or both independently, can use threshold custody with no change to Bitcoin, the Lightning protocol, or its counterparty.
We prove \Iceberg secure and demonstrate its practicality through an integration with \texttt{eclair}: a 2-of-4 endpoint sustains over 93\% of the theoretical maximum payment throughput of an unmodified endpoint, leaving substantial headroom over current Lightning payment rates.

More broadly, \Iceberg shows that thresholdization can be achieved by nesting a threshold group inside an existing multi-signature protocol rather than replacing it.
An open question is which other multi-signature protocols admit such nesting and what properties are required to support it.
For Lightning, a further question is how much a group can change about itself while the channel stays open: proactively refreshing its shares, repairing the share of a member that has lost it, or changing the membership or the threshold.
Each of these has to preserve the aggregate key, since that key is the channel's funding output and changing it means closing and reopening the channel.

\ifnum\fullversion=0
\section*{Ethics Considerations}

This paper proposes a nested threshold multi-signature scheme, proves it secure, and evaluates a prototype that integrates with an existing production Lightning implementation.
We consider the ethical implications of conducting this research and of publishing its results, following a stakeholder-based analysis.

\paragraph{Stakeholders}
The primary stakeholders are the operators of Lightning channel endpoints, from individuals who hold their own keys to exchanges that settle customer deposits over Lightning, and the users whose funds those endpoints hold.
Further stakeholders are the maintainers of the Lightning implementations and of the Bitcoin libraries our prototype builds on, and the wider Bitcoin community that decides which channel types to adopt.
Indirect stakeholders are adversaries who might study the work to understand the defences of a thresholdized endpoint.

\paragraph{Potential benefits}
A Lightning channel holds real funds under a single online key, and both compromise and loss of that key are unrecoverable.
Our work removes that single point of failure without asking the counterparty, the protocol or the chain to change, so the protection can be adopted by one side alone.
The benefit falls to the party that adopts it and does not depend on anyone else acting.

\paragraph{Potential harms and dual use}
Our construction is defensive.
It introduces no attack techniques and it exploits no weaknesses of Bitcoin, MuSig2 or the Lightning protocol.
The one failure mode we analyse, the leakage of a signing key when one nonce is used for two messages, is a textbook property of Schnorr signatures and is the reason our third design decision exists.
We assess the risk that this work enables concrete harm as low, and as outweighed by the benefit of removing an online single point of failure from a system that holds hundreds of millions of dollars.

\paragraph{Research methodology and data handling}
Our evaluation runs a prototype in a controlled environment.
Both channel state machines run on our own machines against an in-memory database, and no channel we measure is opened on mainnet, carries real funds, or connects to a third party's node.
No data from real users is collected, processed or analysed, so the research involves no human subjects, no personal data, and no experiment on a live system.

\paragraph{Decision to publish}
We conclude that publication is justified.
The work strengthens the custody of a system that is already carrying institutional volumes of value, and it introduces no new offensive capability.
We have tried to state the limits of the prototype clearly enough that no reader mistakes it for a production system.

\fi

\ifnum\fullversion=1
\section*{Acknowledgements}
We would like to thank Adam Everspaugh, Leo Nash, and Jonas Nick for helpful discussions in the early stages of this project. 
Paul Gerhart's research has been supported by the Google PhD Fellowship in Privacy, Safety, and Security.
\fi

\ifnum\fullversion=0
\bibliographystyle{IEEEtran}
\else
\bibliographystyle{alpha}
\fi
\bibliography{references}

\ifnum\fullversion=0
\appendices
\else
\appendix

\fi
\section{Deferred Preliminaries} \label{sec:appdx:prelims}
We recall the EUF-CMA security game and pseudocode of $\nestedmusig$~\cite{Koh26}, its corresponding relevant security results, and the verifiable pseudorandom secret sharing $\vpss{1}$ of Arctic~\cite{KG24}.

All of the keys at the leaves of a cosigner tree must cooperate in generating partial signatures in order for the aggregate multi-signature to be computed, just as is the case for a non-nested multi-signature scheme. Thus, the unforgeability game for nested multi-signature schemes is derived directly from the unforgeability game for security under concurrent signing sessions for two-round multi-signatures. Namely, the adversary is given a challenge $\pk^*$ and concurrent oracle access to $\preround$ and $\signround$, where it may set the message to be signed, cosigner tree in which the honest signer is embedded, and all aggregate first round messages as the input to its $\signround$ oracle, and must produce a cosigner tree containing the challenge along with a forgery for the aggregate key on a message not queried to the $\signround$ oracle under that cosigner tree. This is defined rigorously in \autoref{fig:game:eufcma}.

\begin{figure}[t]
 \centering
 \fitbox{\columnwidth}{%
 \begin{varwidth}{2\columnwidth}
 \begin{pcvstack}[boxed,center]
  \pcsetargs{linenumbering}
  \procedure{Game $\pcgame{\eufcma}{\Sigma}$}{%
   \rule{0pt}{\baselineskip}
   \param \gets \setup(\secparam) \\
   \pclinecomment{honest signer has index 1} \\
   (\sk^*,\pk^*) \gets \keygen() \\
   \ctrs \defeq 0 \quad \pccomment{session counter} \\
   S \defeq \emptyset \pccomment{set of open signing sessions after $\preroundo$} \\
   Q \defeq \emptyset \quad \pccomment{set of $\signo$ queries} \\
   ((L_d)_{0\leq d < \Lambda}, m,\sigma) \gets \adv^{\preroundo,\signo}(\pk^*) \\
   \pcfor d \defeq \Lambda - 1,\ldots,0  \pcdo\\
   \t \tX_d \defeq \keyagg(L_d)\\
   \pcassert \pk^* \in L_{\Lambda-1} \wedge \forall d<\Lambda-1,\tX_{d+1}\in L_d\\
   \pcreturn ((L_d)_{0\leq d < \Lambda}, m) \notin Q \wedge \ver(\tX_0,m,\sigma)=\tr
  }
  \pcvspace
  \procedure{Oracle $\preroundo()$}{%
    \rule{0pt}{\baselineskip}
    \ctrs \defeq \ctrs+1 \quad \pccomment{increment session counter} \\
    k \defeq \ctrs\pcsc S \defeq S \cup \{k\} \pccomment{open session $k$}\\
   (\msg_1,\state_{1,k}) \gets \sign() \\
   \pcreturn \msg_1
  }
  \pcvspace
  \procedure{Oracle $\signo\left(k,(\msg^d)_{0\leq d < \Lambda},m,\{\pk_{i,d}\}_{2\leq i\leq n_d, 0\leq d < \Lambda}\right)$}{%
   \rule{0pt}{\baselineskip}
   \pcassert k \in S\pcsc S \defeq S\setminus \{k\}\\
   input \gets \left(\state_{1,k},(\msg^d)_{0\leq d<\Lambda}, \sk^*, m,\{\pk_{i,d}\}_{2\leq i\leq n_d, 0\leq d < \Lambda}\right)\\
   (\state_{1,k}',\msg_1') \gets \sign'(input) \\
   L_{\Lambda - 1} \defeq \{\pk^*,\pk_{2,\Lambda-1},\ldots,\pk_{n_{\Lambda-1},\Lambda-1}\} \\
   \pcfor d\defeq \Lambda - 2,\ldots, 0\pcdo\\
   \t \pk_{1,d} \defeq \keyagg(L_{d+1})\\
   \t L_d \defeq \{\pk_{1,d}, \ldots, \pk_{n_d, d}\}\\
   Q \defeq Q \cup \{((L_d)_{0\leq d < \Lambda}, m)\}\\
   \pcreturn \msg_1'
  }
 \end{pcvstack}
 \end{varwidth}}
 \caption{The EUF-CMA security game for a two-round nested multi-signature scheme $\Sigma$, adapted from~\cite{Koh26}.}
 \label{fig:game:eufcma}
\end{figure}

The $\eufcma$ security of $\nestedmusig$, as defined in \autoref{fig:game:eufcma}, is exactly what we reduce the security of $\scheme{1}$ to in this paper. Under this notion of security, $\nestedmusig$ has two security results from~\cite{Koh26}, one proven in the ROM and the other in the ROM+AGM.

\begin{theorem}\label{thm:musigrom}
Let $\grgen$ be a group generation algorithm for which the AOMDL problem is hard.
The nested multi-signature scheme $\mathsf{NestedMuSig2}[\grgen,\nu=8, D=2]$ is EUF-CMA in the random oracle model for $\Hagg$, $\Hnon$, $\Hnonover$, $\Hsig:\str\to\Zpp$.

Precisely, for any adversary $\adv$ against $\mathsf{NestedMuSig2}[\grgen,\nu=8]$ running in time at most $\tau$, making at most $q_s$ $\signo$ queries and at most $q_h$ queries to each random oracle, and such that the size of $L_d$ in any signing session and in the forgery is at most $N$, and the maximum depth of any signing session is at most $2$, there exists an algorithm $\edv$
  taking as input group parameters $\gparam \gets \grgen(\secparam)$,
  running in time at most
\[
  \tau'=8(\tau + q(3N + 3\nu - 3))\tau_{\rm exp} + O(qN)
\]
 where $q=(N+4)q_h + (2N + 3)q_s + 2N + 1$ and $\tau_{\rm exp}$ is the time of an exponentiation in $\GG$,
 making at most $8q_s$ $\DL_g$ queries,
 and solving the AOMDL problem with an advantage
\[
  \advantage{\aomdl}{\edv,\grgen}
  \ge \frac{\epsilon^8}{16q^7} - \frac{16q + 2\binom{8q + 2}{2}^2 + 12}{2^\secpar},
\]
 where $\epsilon = \advantage{\eufcma}{\adv,\mathsf{NestedMuSig2}[\grgen,\nu=8, D=2]}$.
\end{theorem}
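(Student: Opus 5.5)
The plan is to follow the ROM proof of MuSig2 by Nick, Ruffing and Seurin and deepen its forking by one level so that it also covers the inner key aggregation. The approach is a reduction $\edv$ to AOMDL that runs $\adv$ under a nested forking lemma with \emph{three} levels of forking. The outermost fork is on the inner $\Hagg$ query that fixes the honest coefficient $a_1 = \Hagg(L_1, \pk^*)$ in the inner key list. The middle fork is on the outer $\Hagg$ query fixing the coefficient of $\tX_1$ in $L_0$. The innermost fork is on the $\Hsig(\tX_0, R, m)$ query used by the forgery. This yields $2^3 = 8$ executions, which is exactly where the exponent $8$ in $\epsilon^8/(16q^7)$ comes from: three nested applications of the generalized forking lemma, each roughly squaring the success probability and dividing by $q$. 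It is also why $\nu = 8$ nonces are required. The quantity $q$ bounds all random-oracle queries, including those issued internally when simulating signing and key aggregation, and gives the stated $q=(N+4)q_h+(2N+3)q_s+2N+1$.

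\textbf{Setup and simulation.} $\edv$ takes $X^*$ from its AOMDL challenge as $\pk^*$. For each $\preroundo$ call it takes $\nu = 8$ fresh AOMDL challenge elements as the honest nonces $R_{1,k},\ldots,R_{8,k}$. All random oracles are lazily sampled, and every $\Hnon$/$\Hnonover$ value is programmed fresh whenever a new aggregate-nonce tuple appears. In a $\signo$ query the honest response is $s = \sum_{i} r_{i,k}\check b^{i-1} + c\,a\,x^*$, where $\check b$ is the product of the inner and outer binding factors, as in $\signround$. Because of the forking, the same session $k$ can be asked to sign under up to $8$ different $(\check b, c, a)$ tuples across the executions. $\edv$ answers each with one $\DL_g$ query on the group element $\prod_i R_{i,k}^{\check b^{i-1}}\cdot (X^*)^{c a}$. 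That is at most $8q_s$ oracle calls, matching the statement. For each session this produces at most $8$ linear equations in the $8$ unknowns $r_{1,k},\ldots,r_{8,k}$, plus the unknown $x^*$.

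\textbf{Extraction and counting.} From the $8$ successful forgeries, take the forged responses $s^{(\ell)}$ on the coupled forks. Subtracting pairs that share everything except the $\Hsig$ answer gives $c\,a_0 a_1 x^*$ up to known terms, and iterated differences across the two $\Hagg$ forks isolate $x^*$. The purpose of forking on both coefficients is to cancel the adversary-controlled cosigner contributions, whose exponents need not be extractable because those cosigners may be unknown algebraic combinations. With $x^*$ known, each session's $8$ equations become a Vandermonde-type system in $\check b$. If all $\check b$ values are distinct it is invertible, and $\edv$ recovers all $8q_s$ nonce discrete logs. Together with $x^*$ this gives $8q_s+1$ discrete logarithms from $8q_s$ oracle calls, which is the AOMDL win.

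\textbf{Main obstacle.} I expect the hardest step to be bounding the bad events that break extraction. The first is a session signed under fewer than $8$ distinct $\check b$ values while needing more unknowns resolved; this is handled by solving only for the span actually used and not counting its challenges. The second is a collision of $\check b$ values across forks, and the third is a programming conflict. These produce the $\binom{8q+2}{2}^2/2^\secpar$ and $16q/2^\secpar$ terms. The subtle part is that the binding factor of the inner level is fixed before the outer fork point. So I would first show that every $\signo$ answer depends on the forked coefficients only through $(\check b, c\,a)$, with $a$ here denoting the product of the two aggregation coefficients. That keeps the count of distinct challenge combinations per session at $2^3 = 8$. After this, the remaining steps are bookkeeping as in the MuSig2 ROM proof.
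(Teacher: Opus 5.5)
This theorem is not proved in the paper; it is recalled from \cite{Koh26}, so I judge your sketch on its own. The architecture is the natural lift of the MuSig2 ROM proof of \cite{NRS21}: an AOMDL reduction, three nested forks on the inner $\Hagg$, the outer $\Hagg$ and $\Hsig$, eight executions, eight nonces, and Vandermonde solving per session. However, the step you flag as the main obstacle is resolved incorrectly. With independently lazily-sampled oracles, equal $\check b$ in two executions does \emph{not} imply equal $c$. The adversary can query $\Hnonover(\tX_0,\msg^0,m)$ (and $\Hnon(\tX_1,\msg^1)$) for session $k$ before the forgery's $\Hsig$ fork point, and call $\signo$ on session $k$ only afterwards, so the session's own query $\Hsig(\tX_0,R,m)$ falls after the fork. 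In the two branches, $\check b$ is then literally the same oracle value while $c\neq c'$. You spend two $\DL_g$ queries on $\prod_i R_{i,k}^{\check b^{i-1}}(X^*)^{ca}$ and $\prod_i R_{i,k}^{\check b^{i-1}}(X^*)^{c'a}$. Once $x^*$ is known, both give the same linear form in $r_{1,k},\ldots,r_{8,k}$, so completing that session costs a ninth query, and the margin of $8q_s$ queries versus $8q_s+1$ solutions is gone. This is not a random collision chargeable to $\binom{8q+2}{2}^2/2^\secpar$; the adversary can adopt this behaviour at will. Your observation that answers depend only on $(\check b, ca)$ does not exclude it.

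The missing idea is the timing alignment used in the MuSig2 ROM proof.
\begin{itemize}
\item Whenever the reduction assigns $b_0=\Hnonover(\tX_0,\msg^0,m)$, it immediately issues the internal query $\Hsig(\tX_0,\prod_j\overline{R}_j^{b_0^{j-1}},m)$.
\item It assigns all coefficients $\Hagg(L,\cdot)$ of a list at that list's first query, forking only the entry of $\pk^*$ (respectively $\tX_1$). Then $\tX_0$ pins down $a_{1,0}a_{1,1}$.
\end{itemize}
With this, in any two executions, equal $\check b=b_0b_1$ forces equal $c\,a_{1,0}a_{1,1}$, so the answer is cached and no new query is needed. Otherwise $\check b$ differs except with negligible probability. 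These triggered internal queries are also where $q=(N+4)q_h+(2N+3)q_s+2N+1$ comes from. The same coefficient rule is what makes your cancellation of cosigner contributions across the two $\Hagg$ forks valid.

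Two smaller points. First, AOMDL requires solving every challenge. A session signed under fewer than $8$ distinct $\check b$ must therefore be completed with direct $\DL_g$ queries rather than ``not counted'', which again works only if no query was wasted. Second, subtracting two $\Hsig$-forked forgeries gives $(c-c')\log\tX_0$, which contains unknown cosigner exponents, not $c\,a_0a_1x^*$ up to known terms.
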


\begin{theorem}\label{thm:musigagm}
Let $\grgen$ be a group generation algorithm for which the AOMDL problem is hard.
Then the multi-signature scheme $\mathsf{NestedMuSig2}[\grgen,\nu=2, D=2]$ is EUF-CMA in the algebraic group model for $\grgen$, and the random oracle model for $\Hagg$, $\Hnon$, $\Hnonover$, $\Hsig:\str\to\Zpp$.

Precisely, for any algebraic adversary $\adv$ against $\mathsf{NestedMuSig2}[\grgen,\nu=2]$ running in time at most~$\tau$, making at most $q_s$ $\signo$ queries, at most $q_h$ queries to each random oracle, such that the size of $L$ in any signing session and in the forgery is at most $N$, and such that the depth, $\Lambda$, of any signing session or forgery is at most $2$, there exists an algorithm $\bdv$ running in time at most
\[
  \tau'= \tau + O(qN)\cdot \tau_{\rm exp} + O(q^2),
\]
where $q = \binom{2q_h + 2q_s + 2}{2} + 2N(q_s + 1) + Nq_h$ and $\tau_{\rm exp}$ is the time of an exponentiation in $\GG$ and making at most $2 q_s$ $\DL_g$ queries such that
\[
 \advantage{\aomdl}{\bdv,\grgen}
 \geq
 \advantage{\eufcma}{\adv,\mathsf{NestedMuSig2}[\grgen,\nu=2]}
 - \frac{30q^4}{2^\secpar}.
\]
\end{theorem}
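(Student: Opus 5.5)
The plan is to adapt the algebraic-group-model proof of two-round MuSig2 with $\nu=2$ from~\cite{NRS21} to a cosigner tree of depth $\Lambda\le 2$. The reduction $\bdv$ plays the EUF-CMA game of \cref{fig:game:eufcma} for the algebraic adversary $\adv$ and wins AOMDL.

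\textbf{Embedding.} $\bdv$ requests AOMDL challenges as follows:
\begin{itemize}
\item one challenge is used as $\pk^*=X^*$;
\item each $\preroundo$ call requests two fresh challenges $R_{k,1},R_{k,2}$ as the honest nonces, giving $2q_s+1$ challenges in total.
\end{itemize}
All of $\Hagg,\Hnon,\Hnonover,\Hsig$ are lazily sampled random oracles. Because $\adv$ is algebraic, every group element it outputs comes with a representation over $g$, $X^*$ and the nonces issued so far. This includes keys in $L_1,L_0$, outer aggregate nonces, and the forgery nonce.

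\textbf{Answering $\signo$.} On a query for session $k$, $\bdv$ proceeds as follows:
\begin{itemize}
\item It computes the effective coefficient $\check a=a_1a_0$, the product of the $\keyaggcoef$ values on the path from $X^*$ to $\tX_0$.
\item It computes the effective binding factor $\check b=b_1b_0$ and the challenge $c=\Hsig(\tX_0,R,m)$.
\item It asks its DLog oracle for $r_{k,1}+\check b\,r_{k,2}+c\,\check a\,x^*$, i.e.\ the discrete log of $R_{k,1}R_{k,2}^{\check b}(X^*)^{c\check a}$.
\end{itemize}
This costs one DLog query per signature, well within $2q_s$. The answer is exactly the honest partial signature. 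Since each session is closed after one $\signo$ call, the simulation is perfect.

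\textbf{Extraction from the forgery.} At the forgery, the equation $g^s=R\,\tX_0^{c}$ is expanded using $\adv$'s representations. This yields one linear relation in the unknowns $x^*$ and $\{r_{k,1},r_{k,2}\}$. Each answered query contributes one more relation, so we have $q_s+1$ equations in $2q_s+1$ unknowns. To close the gap, $\bdv$ spends its remaining $q_s$ DLog queries on, say, $r_{k,2}$ for every session. It then solves for $x^*$ and hence for all challenges, giving one more solution than DLog queries made.

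\textbf{Main obstacle.} The difficulty is showing that the coefficient of $x^*$ in the final relation is nonzero except with probability about $30q^4/2^\secpar$. I will split into two cases.

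\emph{(a) Degenerate key aggregation.} The adversary's representation of $\tX_0$ makes the $x^*$-coefficient cancel. This is handled by noting that each $\keyaggcoef$ value, at both levels, is sampled by $\Hagg$ only after the key list and its representation are fixed. So the exponent of $X^*$ in $\tX_0$ is a nonzero polynomial in fresh random values, and Schwartz--Zippel bounds the bad probability by $O(Nq_h/p)$.

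\emph{(b) ROS-style cancellation.} The forgery's $R$ and $c$ conspire with the $\check b$'s of earlier sessions so that, after substituting answered relations, $x^*$ cancels. Here I follow the $\nu=2$ argument of NRS21: once $r_{k,2}$ is revealed, each session relation pins $r_{k,1}$ as a linear function of $x^*$ with slope $-\check b_k$ times its nonce coefficient. The forgery then has coefficient $c^*\check a^*-\sum_k(\alpha_k-\beta_k/\check b_k)c_k\check a_k$ on $x^*$, where $\alpha_k,\beta_k$ are $\adv$'s coefficients on $R_{k,1},R_{k,2}$.

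The crucial check for nesting is that $\check b_k=b_{1}b_{0}$ remains uniform given everything the adversary fixed beforehand. This holds because $b_0=\Hnonover(\cdot)$ is queried after the outer nonces and their representations are committed. Hence $\check b_k$ is uniform and independent whenever $b_0$ is fresh.

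Cancellation then requires either a collision among $\Hsig$/$\Hnonover$ outputs or a random oracle output hitting a root of a low-degree polynomial determined earlier. A union bound over the $\binom{2q_h+2q_s+2}{2}$ relevant query pairs, squared for the two-level dependency, gives the $q^4/2^\secpar$ term.

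I expect this bookkeeping of which oracle output is fresh relative to which representation, at both tree levels, to be the delicate part.
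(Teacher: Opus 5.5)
\textbf{The gap is in your extraction step.} The paper does not prove this theorem; it imports it from \cite{Koh26}, so I am judging your outline on its own terms. Your embedding and your signing simulation are the right ones: one $\DL$ query on $R_{k,1}R_{k,2}^{\check b}(X^*)^{c\check a}$ per closed session. The problem starts when you fix in advance that the remaining queries go to the $r_{k,2}$ and then claim the $x^*$-coefficient is nonzero except with probability about $30q^4/2^\secpar$.

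Redo the substitution. With $r_{k,2}$ known and $r_{k,1}=u_k-\check b_k r_{k,2}-c_k\check a_k x^*$, the $x^*$-coefficient of the forgery relation is
\[
\Delta=\xi+c^*\delta-\sum_k A_k c_k\check a_k .
\]
Here $\xi$ and $\delta$ are the coefficients of $X^*$ in the representations of $R^*$ and $\tX_0$, and $A_k$ is the total coefficient of $r_{k,1}$ in the forgery relation. The coefficients of $R_{k,2}$ and the binding factors $\check b_k$ drop out entirely, so your expression with $\alpha_k-\beta_k/\check b_k$ is not what the substitution gives.

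Consequently, $\Delta=0$ is a plain single-nonce ROS condition in the $A_k$ and $c_k$. The attack of \cite{EC:BLLOR21} meets it with certainty, with no collision and no oracle output hitting a predetermined root. Your case (b) never uses that the adversary must also output a correct $s^*$. Concretely, take an algebraic adversary that has learned the $r_{k,2}$, e.g.\ by exhaustive search, which a concrete bound quantified over every $\tau$ must cover. It sees each session as single-nonce MuSig and forges via that attack with $R^*=(X^*)^{\xi}\prod_k R_{k,1}^{\alpha_k}$. Your $\bdv$ is then left with a forgery equation that is only a consistency check on the already-queried $r_{k,2}$, and $x^*$ stays undetermined. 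So the claimed bound for case (b) is false, and your $\bdv$ does not achieve the stated advantage.

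\textbf{The fix is to choose the last queries only after seeing the forgery's representation.} Let $C$ be the set of closed sessions. After substituting their signing relations, the forgery yields a linear form $\Delta x^*+\sum_{k\in C}E_k r_{k,2}$ with $E_k=B_k-A_k\check b_k$. Here $A_k,B_k$ are the total coefficients of $r_{k,1},r_{k,2}$, including $c^*$ times any honest-nonce terms in the representation of $\tX_0$.
\begin{itemize}
\item If $\Delta\neq 0$, query all $r_{k,2}$ as you propose.
\item If $\Delta=0$ but some $E_{k_0}\neq 0$, query $\DL(X^*)$ and $r_{k,2}$ for $k\neq k_0$, then solve the forgery relation for $r_{k_0,2}$.
\item Sessions that are opened but never closed get two direct queries.
\end{itemize}
Either way $2q_s$ queries suffice. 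The reduction fails only if the forgery relation lies in the span of the signing relations, that is, $\Delta=0$ and $B_k=A_k\check b_k$ for every $k\in C$.

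Only this event needs the random-oracle analysis, and only here does the freshness of $\check b_k=b_1b_0$ matter. The ratio $B_k/A_k$ is fixed when $R^*$ is first output and must equal a binding factor that was already sampled. That pins the input of session $k$'s challenge query, so $\Delta=0$ becomes an equation in whichever of $c^*$ and the $c_k$ is sampled last. The bookkeeping of this event, which is polynomial in $c^*$ when keys contain honest nonces, is where a pairwise union bound of the form $30q^4/2^\secpar$ has to come from.
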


Lastly, in \autoref{fig:vpss1} below, we recall the definition of the Verifiable Pseudorandom Secret Sharing function $\vpss{1}$ as defined in~\cite{KG24}. This protocol allows a group of $n$ parties with an honest majority to perform replicated secret sharing on pseudorandom function seeds to generate an arbitrary number of (replicated) shared secrets. Share conversion (introduced in~\cite{TCC:CraDamIsh05}) is then used to turn these replicated secret shares into Shamir secret shares locally without any communication required. In the honest-majority setting, $\vpss{1}$ allows parties to verify function outputs in aggregate by performing verification ``in the exponent'' and using the fact that an honest majority implies that any honest set of $t$ parties fully determine an interpolated curve of degree $t-1$ so that interpolating a higher degree curve proves dishonest activity has occured.
\section{Additional Figures} \label{sec:appdx:figures}

\pairfigstart

  \fitbox{\columnwidth}{%
  \begin{varwidth}{2\columnwidth}
  \begin{pchstack}[boxed,center]
   \pcsetargs{linenumbering}
   \begin{pcvstack}
     \procedure{$\setup(\secparam)$}{
       \rule{0pt}{\baselineskip}
       \param \defeq \gparam \gets \grgen(\secparam)\\
       \pcreturn \param
     }
     \pcvspace
     \procedure{$\keygen(n, t, \mu)$}{
       \rule{0pt}{\baselineskip}
       \pcassert \mu\geq 2t-1\\
       \asets \defeq \sbinom{\interval{n}}{t-1}\\
       \pcfor \aset\in\asets \pcdo\\
       \t \phi_{\aset} \sample \Zpp \pccomment{$\sk = \ssum_{\aset\in\asets}\phi_{\aset}$}\\
       \pcfor k\in \interval{n} \pcdo\\
       \t \sk_k \defeq \{(\aset, \phi_{\aset})\}_{\aset\in\asetsof{k}}\\
       \pcreturn (\sk_1,\ldots, \sk_n) \pccomment{$\verts{\sk_k} = \sbinom{n-1}{t-1}$}
     }
     \pcvspace
     \procedure{$\gen(k, \sk_k, w)$}{
       \rule{0pt}{\baselineskip}
       \{(\aset, \phi_{\aset})\}_{\aset\in\asetsof{k}} \defeq \sk_k\\
       d_k \defeq (d_{k,\aset})_{\aset\in\asetsof{k}} \defeq (\Hprf(\phi_{\aset}, w))_{\aset\in\asetsof{k}}\\
       D_k \defeq (D_{k,\aset})_{\aset\in\asetsof{k}} \defeq (g^{d_{k,\aset}})_{\aset\in\asetsof{k}}\\
       \pcreturn (d_k, D_k)
     }
   \end{pcvstack}
   \pchspace
   \begin{pcvstack}
     \procedure{$\vpssver{0}(t, \mu, C, \{D_j\}_{j\in C})$}{
       \rule{0pt}{\baselineskip}
       \pcassert \verts{C}\geq \mu\\
       \pcfor j\in C \pcdo\\
       \t (D_{j, \aset})_{\aset\in\asetsof{j}} \defeq D_j\\
       \pcif \forall \aset\in \asets, \verts{\{D_{j,\aset}\}_{j\in C\setminus \aset}} = 1 \pcthen\\
       \t \pcreturn 1\\
       \pcelse \pcreturn 0
     }
     \pcvspace
     \procedure{$\agg(t, \mu, C, \{D_j\}_{j\in C})$}{
       \rule{0pt}{\baselineskip}
       \pcfor \aset\in \asets\pcdo\\
       \t \{D_{\aset}'\} \defeq \{D_{j,\aset}\}_{j\in C\setminus \aset}\\
       \pcreturn D\defeq \sprod_{\aset\in\asets}D_{\aset}'
     }
     \pcvspace
     \procedure{$\recover(t, \mu, C, \{d_j\}_{j\in C})$}{
       \rule{0pt}{\baselineskip}
       \pcassert \verts{C}\geq \mu\land C\subseteq \interval{n}\\
       \pcfor j\in C\pcdo\\
       \t (d_{j,\aset})_{\aset\in\asetsof{j}} \defeq d_j\\
       \t \pcfor \aset\in\asetsof{j}\pcdo\\
       \t \t D_{j, \aset}\defeq g^{d_{j,\aset}}\\
       \t D_j \defeq (D_{j,\aset})_{\aset\in\asetsof{j}}\\
       \pcassert \vpssver{0}(t, \mu, C, \{D_j\}_{j\in C})\\
       \pcfor \aset\in\asets\pcdo\\
       \t \{d_{\aset}'\} \defeq \{d_{j,\aset}\}_{j\in C\setminus \aset}\\
       d \defeq \ssum_{\aset\in\asets} d_{\aset}'\pcsc D\defeq g^d\\
       \pcreturn (d, D)
     }
   \end{pcvstack}
  \end{pchstack}
  \end{varwidth}}
  \caption{The protocol $\vpss{0}$, a variant of $\vpss{1}$ without share conversion. Notation for subsets is as in \cref{fig:vpss1}.}
  \label{fig:vpss0}
\pairfigmid

  \fitbox{\columnwidth}{%
  \begin{varwidth}{2\columnwidth}
  \begin{pchstack}[boxed,center]
   \pcsetargs{linenumbering}
   \begin{pcvstack}
     \procedure{$\setup(\secparam)$}{
       \rule{0pt}{\baselineskip}
       \param \defeq \gparam \gets \grgen(\secparam)\\
       \pcreturn \param
     }
     \pcvspace
     \procedure{$\keygen(n, t, \mu)$}{
       \rule{0pt}{\baselineskip}
       \pcassert \mu\geq 2t-1\\
       \asets \defeq \sbinom{\interval{n}}{t-1}\\
       \pcfor \aset\in\asets \pcdo\\
       \t \phi_{\aset} \sample \Zpp \pccomment{$\sk = \ssum_{\aset\in\asets}\phi_{\aset}$}\\
       \pcfor k\in \interval{n} \pcdo\\
       \t \sk_k \defeq \{(\aset, \phi_{\aset})\}_{\aset\in\asetsof{k}}\\
       \pcreturn (\sk_1,\ldots, \sk_n) \pccomment{$\verts{\sk_k} = \sbinom{n-1}{t-1}$}
     }
     \pcvspace
     \procedure{$\gen(k, \sk_k, w)$}{
       \rule{0pt}{\baselineskip}
       \{(\aset, \phi_{\aset})\}_{\aset\in\asetsof{k}} \defeq \sk_k\\
       \pcfor \aset\in\asetsof{k}\pcdo\\
       \t L'_{\aset}(x)\defeq \sprod_{j\in \aset}\frac{j-x}{j}\\
       \pclinecomment{the values $L'_{\aset}(k)$ can be precomputed}\\
       d_k \defeq \ssum_{\aset\in\asetsof{k}}\Hprf(\phi_{\aset}, w)\cdot L'_{\aset}(k)\\
       D_k \defeq g^{d_k}\\
       \pcreturn (d_k, D_k)
     }
   \end{pcvstack}
   \pchspace
   \begin{pcvstack}
     \procedure{$\vpssver{1}(t, \mu, C, \{D_j\}_{j\in C})$}{
       \rule{0pt}{\baselineskip}
       \pcassert \verts{C}\geq \mu\\
       \pcfor i\defeq t,\ldots, \verts{C}-1\pcdo\\
       \t \pcfor j\in C\pcdo\\
       \t \t L_j(x) \defeq \sprod_{k\in C\setminus\{j\}}\frac{x-k}{j-k}\\
       \t \t \lambda_{j, i} \defeq [x^i](L_j(x))\\
       \t B_i \defeq \sprod_{j\in C}D_j^{\lambda_{j,i}}\\
       \t \pcif B_i\neq 1_\GG\pcthen \pcreturn 0\\
       \pcreturn 1
     }
     \pcvspace
     \procedure{$\agg(t, \mu, C, \{D_j\}_{j\in C})$}{
       \rule{0pt}{\baselineskip}
       \pcfor j\in C\pcdo\\
       \t \lambda_j \defeq \sprod_{k\in C\setminus \{j\}}\frac{k}{k - j}\\
       \pcreturn D\defeq \sprod_{j\in C} D_j^{\lambda_j}
     }
     \pcvspace
     \procedure{$\recover(t, \mu, C, \{d_j\}_{j\in C})$}{
       \rule{0pt}{\baselineskip}
       \pcassert \verts{C}\geq \mu\land C\subseteq \interval{n}\\
       \pcfor j\in C\pcdo\\
       \t D_j\defeq g^{d_j}\\
       \t \lambda_j \defeq \sprod_{k\in C\setminus \{j\}}\frac{k}{k - j}\\
       \pcassert \vpssver{1}(t, \mu, C, \{D_j\}_{j\in C})\\
       d \defeq \ssum_{j\in C} d_j\cdot \lambda_j\pcsc D \defeq g^d\\
       \pcreturn (d, D)
     }
   \end{pcvstack}
  \end{pchstack}
  \end{varwidth}}
  \caption{The protocol $\vpss{1}$ from Arctic~\cite{KG24}. We write $\asets = \sbinom{\interval{n}}{t-1}$ for the family of all $(t-1)$-subsets of $\interval{n}$ and $\asetsof{k} = \{\aset\in\asets : k\notin\aset\}$ for the subsets whose shares member $k$ holds.}
  \label{fig:vpss1}
\pairfigend

\pairfigstart

  \fitbox{\columnwidth}{%
  \begin{varwidth}{2\columnwidth}
  \begin{pcvstack}[boxed,center]
   \pcsetargs{linenumbering}
   \begin{pchstack}
    \begin{pcvstack}
      \procedure{$\setup(\secparam)$}{
        \rule{0pt}{\baselineskip}
        \param \defeq \gparam \gets \grgen(\secparam)\\
        \pcreturn \param
      }
      \pcvspace
      \procedure{$\preround(k, \sk_k, \sid)$}{
        \rule{0pt}{\baselineskip}
        \pclinecomment{$\sid$ is assumed to have a fixed length}\\
        \pcfor i\in\interval{\nu}\pcdo\\
        \t (r_{i,k}, R_{i,k}) \defeq \highlight{\vpss{0}.\gen}(k, \sk_k, i\concat \sid)\\
        \pcreturn (R_{1,k},\ldots, R_{\nu, k})
      }
    \end{pcvstack}
    \pchspace
    \begin{pcvstack}
      \procedure{$\keygen(k, n, t, A)$}{
        \rule{0pt}{\baselineskip}
        \pcassert n\geq 2t-1\\
        \sk_k \gets \highlight{\vpss{0}.\keygen}(n, t, 2t-1)[k]\\
        (x_k, X_k)\defeq \highlight{\vpss{0}.\gen}(k, \sk_k, w_0)\\
        (C, \{X_j\}_{j\in C}) \gets A(X_k)\\
        \pcassert \highlight{\vpssver{0}}(t, 2t-1, C, \{X_j\}_{j\in C}) = 1\\
        X \defeq \highlight{\vpss{0}.\agg}(t, 2t-1, C, \{X_j\}_{j\in C})\\
        \pcreturn (X, \sk_k)
      }
    \end{pcvstack}
   \end{pchstack}
   \pcvspace
   \begin{pchstack}
    \begin{pcvstack}
      \procedure{$\preagg(\pk, C, (R_{i,j})_{i\in\interval{\nu},j\in C})$}{
        \rule{0pt}{\baselineskip}
        \pcfor i\in\interval{\nu}\pcdo\\
        \t \pcassert \highlight{\vpssver{0}}(t, 2t-1, C, \{R_{i,j}\}_{j\in C}) = 1\\
        \t R_i' \defeq \highlight{\vpss{0}.\agg}(t, 2t-1, C, \{R_{i,j}\}_{j\in C})\\
        b_1 \defeq \Hnon(\pk, (R_1',\ldots, R_\nu'))\\
        \pcfor i \in \interval{\nu}\pcdo\\
        \t R_i \defeq (R_i')^{b_1^{i-1}}\\
        \pcreturn (R_1, \ldots, R_\nu)
      }
    \end{pcvstack}
    \pchspace
    \begin{pcvstack}
      \procedure{$\signagg(R, C, \{s_k\}_{k\in C})$}{
        \rule{0pt}{\baselineskip}
        \pclinecomment{$R$ can be computed from protocol}\\
        \pclinecomment{messages as in $\signround$}\\
        \pcfor k\in C\pcdo\\
        \t \highlight{(s_{k,\aset})_{\aset\in\asetsof{k}} \defeq s_k}\\
        \highlight{\pcfor \aset\in\asets\pcdo}\\
        \t \highlight{\{s_{\aset}'\} \defeq \{s_{k,\aset}\}_{k\in C\setminus\aset}}\\
        \highlight{s \defeq \ssum_{\aset\in \asets} s_{\aset}'}\\
        \pcreturn (R, s)
      }
    \end{pcvstack}
   \end{pchstack}
   \pcvspace
   \procedure{$\signround(k, \sk_k, \pk, \sid, m, (R_i')_{i\in\interval{\nu}}, \overline{j}, \{\overline{\pk}_i\}_{i\in\interval{\overline{n}}\setminus \{\overline{j}\}}, (\overline{R}_i)_{i\in\interval{\nu}})$}{
     \rule{0pt}{\baselineskip}
     \highlight{(x_{k,\aset}, X_{k,\aset})_{\aset\in\asetsof{k}} \defeq \vpss{0}.\gen(k, \sk_k, w_0)}\\
     \pcfor i\in\interval{\nu}\pcdo\\
     \t (r_{i,k}, R_{i,k}) \defeq \highlight{\vpss{0}.\gen}(k, \sk_k, i\concat \sid)\\
     \t \highlight{(r_{i,k,\aset})_{\aset\in\asetsof{k}} \defeq r_{i,k}}\\
     \overline{\pk}_{\overline{j}} \defeq \pk\pcsc L\defeq (\overline{\pk}_j)_{j\in\interval{\overline{n}}}\\
     \tX\defeq \keyagg(L) \pcsc a \defeq \keyaggcoef(L, \pk)\\
     b_1 \defeq \Hnon(\pk, (R_1',\ldots, R_\nu'))\pcsc b_0 \defeq \Hnonover(\tX, (\overline{R}_1,\ldots, \overline{R}_\nu), m)\\
     \check{b} \defeq b_1\cdot b_0\pcsc R \defeq \sprod_{i\in\interval{\nu}}\overline{R}_i^{b_0^{i-1}}\pcsc c \defeq \Hsig(\tX, R, m)\\
     \highlight{\pcfor \aset\in\asetsof{k}\pcdo}\\
     \t \highlight{s_{k,\aset} \defeq \ssum_{i\in\interval{\nu}}r_{i,k,\aset}\,\check{b}^{i-1} + c\cdot a\cdot x_{k,\aset}}\\
     \pcreturn \highlight{s_k \defeq (s_{k,\aset})_{\aset\in\asetsof{k}}}
   }
  \end{pcvstack}
  \end{varwidth}}
  \caption{The protocol $\scheme{0}[\grgen, \nu, t]$. Changes compared to $\scheme{1}[\grgen, \nu, t]$ are \highlight{\text{highlighted}}. Notation for subsets is as in \cref{fig:vpss1}.}
  \label{fig:iceberg0}
\pairfigmid

  \fitbox{\fullscale\columnwidth}{%
  \begin{varwidth}{2\columnwidth}
  \begin{pchstack}[boxed,center]
   \pcsetargs{linenumbering}
   \begin{pcvstack}
    \procedure{$\setup(\secparam)$}{%
      \rule{0pt}{\baselineskip}
      \gparam \gets \grgen(\secparam) \\
      \text{Select four hash functions}\\
      \Hagg, \Hnon, \Hnonover, \Hsig:\str\to\Zpp\\
      \param \defeq\\
      \t (\gparam, \Hagg, \Hnon, \Hnonover, \Hsig)\\
      \pcreturn \param
    }
    \pcvspace
    \procedure{$\keygen()$}{%
      \rule{0pt}{\baselineskip}
      x \sample \ZZ_p \pcsc X \defeq g^x \\
      \sk \defeq x \pcsc \pk \defeq X \\
      \pcreturn (\sk,\pk)
    }
    \pcvspace
    \procedure{$\musigcoef(L,X_i)$}{%
      \rule{0pt}{\baselineskip}
      \pcreturn \Hagg(L,X_i)
    }
    \pcvspace
    \procedure{$\keyagg(L)$}{%
      \rule{0pt}{\baselineskip}
      \{X_1,\ldots,X_n\} \defeq L \\
      \pcfor i \defeq 1 \ldots n \pcdo \\
      \t a_i \defeq \musigcoef(L,X_i)\\
      \pcreturn \tX \defeq \sprod_{i=1}^n X_i^{a_i}
    }
    \pcvspace
    \procedure{$\ver(\tX,m,\sigma)$}{%
      \rule{0pt}{\baselineskip}
      (R,s) \defeq \sigma \\
      c \defeq \Hsig(\tX,R,m) \\
      \pcreturn (g^s = R\tX^c)
    }
    \pcvspace
    \procedure{$\sign()$}{%
      \rule{0pt}{\baselineskip}
      \pclinecomment{Local signer has index $1$.} \\
      \pcfor j \defeq 1 \ldots \nu \pcdo \\
        \t r_{1,j} \sample \ZZ_p \pcsc R_{1,j} \defeq g^{r_{1,j}} \\
      \msg_1 \defeq (R_{1,1},\ldots,R_{1,\nu}) \\
      \state_1 \defeq (r_{1,1},\ldots,r_{1,\nu}) \\
      \pcreturn (\msg_1,\state_1)
    }
   \end{pcvstack}
   \pchspace[2em]
   \begin{pcvstack}
    \procedure{$\signagg(\msg_1,\ldots,\msg_n)$}{%
     \rule{0pt}{\baselineskip}
     \pcfor i \defeq 1 \ldots n \pcdo \\
     \t (R_{i,1}, \ldots, R_{i,\nu}) \defeq \msg_i\\
     \pcfor j \defeq 1 \ldots \nu \pcdo \\
     \t R_j \defeq \sprod_{i=1}^n R_{i,j}\\
     \pcreturn (R_1, \ldots, R_\nu)
    }
    \pcvspace
    \procedure{$\signaggext(\msg, \tX)$}{%
     \rule{0pt}{\baselineskip}
     (R_1', \ldots, R_\nu') \defeq \msg\\
     b \defeq \Hnon(\tX, (R_1', \ldots, R_\nu'))\\
     \pcfor j \defeq 1 \ldots \nu \pcdo \\
     \t R_j \defeq (R_j')^{b^{j-1}}\\
     \pcreturn (R_1, \ldots, R_\nu)
    }
    \pcvspace
    \procedure{$\sign'(\state_1,(\msg^d)_{0\leq d<\Lambda},\sk_1,m,\{\pk_{i,d}\}_{2\leq i\leq n_d,0\leq d < \Lambda})$}{%
      \rule{0pt}{\baselineskip}
      \pclinecomment{$\sign'$ must be called at most once per $\state_1$.} \\
      (r_{1,1},\ldots,r_{1,\nu}) \defeq \state_1 \pcsc \pk_{1,\Lambda-1} \defeq g^{\sk_1} \\
      L_{\Lambda - 1} \defeq \{\pk_{1,\Lambda-1},\ldots,\pk_{n_{\Lambda-1},\Lambda-1}\} \\
      a_{1,\Lambda-1} \defeq \musigcoef(L_{\Lambda-1},\pk_{1,\Lambda-1})\\
      \pk_{1,\Lambda-2} \defeq \keyagg(L_{\Lambda-1})\\
      \pcfor d\defeq \Lambda - 2,\ldots,0\pcdo\\
      \t b_{d+1} \defeq \Hnon(\pk_{1,d}, \msg^{d+1})\\
      \t L_d \defeq \{\pk_{1,d},\ldots, \pk_{n_d, d}\}\\
      \t a_{1,d} \defeq \musigcoef(L_d, \pk_{1,d})\\
      \t \pk_{1,d-1} \defeq \keyagg(L_d)\\
      \tX \defeq \pk_{1,-1}\pcsc (R_{1,0}, \ldots, R_{\nu,0}) \defeq \msg^0\\
      b_0 \defeq \Hnonover(\tX,(R_{1,0},\ldots,R_{\nu,0}),m) \\
      R \defeq \sprod_{j=1}^{\nu} R_{j,0}^{b_0^{j-1}} \pcsc \check{b} \defeq \sprod_{\ell=0}^{\Lambda-1}b_\ell\\
      c \defeq \Hsig(\tX,R,m) \pcsc \check{c} \defeq c\sprod_{\ell=0}^{\Lambda-1}a_{1,\ell}\\
      s_1 \defeq \check{c}\sk_1 + \ssum_{j=1}^{\nu} r_{1,j} \check{b}^{j-1} \bmod p \\
      \state'_1 \defeq R \pcsc \msg'_1 \defeq s_1 \\
      \pcreturn (\state'_1, \msg'_1)
    }
    \pcvspace
    \procedure{$\signagg'(\msg'_1,\ldots,\msg'_n, \state'_1)$}{%
      \rule{0pt}{\baselineskip}
      R \defeq \state'_1\\
      (s_1, \ldots, s_n) \defeq (\msg'_1, \ldots, \msg'_n)\\
      s \defeq \ssum_{i=1}^n s_i \bmod p \\
      \pcreturn \sigma\defeq (R, s)
    }
   \end{pcvstack}
  \end{pchstack}
  \end{varwidth}}
  \caption{The nested multi-signature scheme $\nestedmusig[\grgen,\nu]$~\cite{Koh26}. Public parameters $\param$ returned by $\setup$ are implicitly given as input to all other algorithms.}
  \label{fig:nestedmusig}
\pairfigend

\end{document}